\newcommand{\ie}{{i.e.,~\xspace}}
\newcommand{\eg}{{e.g.,~\xspace}}
\newcommand{\sbr}[1]{\left[\,#1\,\right]}
\newcommand{\xhdr}[1]{\vspace{1mm} \noindent{\bf #1}}
\newcommand{\cK}{\mathcal{K}}
\newcommand{\cT}{\mathcal{T}}
\newcommand{\cR}{\mathcal{R}}
\newcommand{\cM}{\mathcal{M}}
\newcommand{\cA}{\mathcal{A}}
\newcommand{\cC}{\mathcal{C}}
\newcommand{\cD}{\mathcal{D}}
\newcommand{\cP}{\mathcal{P}}
\newcommand{\cQ}{\mathcal{Q}}
\newcommand{\vp}{\mathbf{p}}
\newcommand{\vx}{\mathbf{x}}
\newcommand{\lp}{\mathrm{LP}}
\newcommand{\scdp}{\texttt{set\_cover}}
\newcommand{\nadp}{\texttt{non-adaptive}}
\newtheorem{theorem}{Theorem}[section]
\newtheorem{lemma}[theorem]{Lemma}
\newtheorem{remark}[theorem]{Remark}
\newtheorem{corollary}[theorem]{Corollary}
\newtheorem{assumption}[theorem]{Assumption}
\newtheorem{example}[theorem]{Example}
\newtheorem{definition}[theorem]{Definition}
\begin{document}

% Title. Note the optional short title for running heads. In the interest of anonymization, please do not include any acknowledgements.
\title{Algorithmic Persuasion Through Simulation}

\author[1]{Keegan Harris\footnote{Some of the results were obtained while the author was an intern at Microsoft Research.}}
\author[2]{Nicole Immorlica}
\author[2]{Brendan Lucier}
\author[2]{Aleksandrs Slivkins}

\affil[1]{Carnegie Mellon University, \texttt{keeganh@cs.cmu.edu}}
\affil[2]{Microsoft Research, \texttt{\{nicimm,brlucier,slivkins\}@microsoft.com}}
%\author[]{Anonymous Authors}

\date{First version: November 2023\\ This version: February 2025}

\maketitle

\begin{abstract}
We study a Bayesian persuasion game where a sender wants to persuade a receiver to take a binary action, such as purchasing a product. 
The sender is informed about the (real-valued) state of the world, such as the quality of the product, 
but only has limited information about the receiver's beliefs and utilities. 
Motivated by customer surveys, user studies, and recent advances in AI, we allow the sender to learn more about the receiver by querying an oracle that simulates the receiver's behavior. 
After a fixed number of queries, the sender commits to a messaging policy and the receiver takes the action that maximizes her expected utility given the message she receives. 
We characterize the sender's optimal messaging policy given any distribution over receiver types. 
We then design a polynomial-time querying algorithm that optimizes the sender’s expected utility in this game. 
We also consider approximate oracles, more general query structures, and costly queries.\looseness-1 

\end{abstract}

\section{Introduction}\label{sec:intro}
Information design~\citep{BergemannMorris-survey19} 
%\footnote{See \citep{BergemannMorris-survey19} for a survey of this research area.} 
is a branch of theoretical economics that analyzes how provision of information by an informed designer influences the strategic behavior of agents in a game. We initiate the study of information design with oracle access to agent types.  This oracle is endowed with information about the agents and can be queried by the designer in order to refine her beliefs and thus improve her decision of what information to convey to the agents. 

We focus on Bayesian persuasion~\citep{kamenica2011bayesian, kamenica2019bayesian}, a paradigmatic setting in information design. Bayesian persuasion is an information design game between two players: an informed sender (\ie designer), who observes the state of the world, and an uninformed receiver (\ie agent), who does not see the state but takes an action. The payoffs of both players depend on both the world's state and the receiver's action. The game proceeds as follows: The sender commits to a messaging policy, \ie a mechanism for revealing information to the receiver about the state of the world, before the state is realized. Once the state is realized, the sender sends a message to the receiver according to their messaging policy. Upon receiving the message, the receiver updates her belief about the state of the world, and takes an action. 

The sender's payoff-maximizing messaging policy often depends on information about the receiver such as the receiver's utility function or her belief about the state of the world. The standard setting assumes the sender has \emph{full information} about the receiver, but this may not always be the case. The line of work on ``robust" Bayesian persuasion \citep[\eg][]{dworczak2022preparing, parakhonyak2022persuasion, hu2021robust} takes the other extreme, assuming the sender must construct a messaging policy with \emph{no information} about the receiver, \ie an unquantifiable uncertainty.  Such ``robust" messaging policies are more broadly applicable than those tailored to the receiver's type,  but extract less utility for the sender. 
The model of  Bayesian persuasion with a privately informed receiver \citep[\eg][]{kamenica2011bayesian,kolotilin2017persuasion} lies between these two extremes: the sender has \emph{partial information} expressed as a second-order prior.  
Another ``in-between" model is ``online" Bayesian persuasion \cite{castiglioni2020online, castiglioni2021multi, bernasconi2023optimal}, where the sender starts out uninformed (akin to robust Bayesian persuasion), but \emph{learns} over time as it interacts with a sequence of receivers (see Section~\ref{sec:related}).

We consider a model of Bayesian Persuasion that combines \emph{partial information} on the receiver (modeled as a privately informed receiver) and sender \emph{learning} to refine this partial information (for the sake of computing a better messaging policy). We posit that the sender can consult an external information source ---  put differently, \emph{query the oracle} --- at some cost. For concreteness, we focus on a sender with either a fixed budget of oracle queries, or subject to per-query costs.

%The sender's payoff-maximizing messaging policy often depends on information about the receiver; for example, the receiver's utility function or her belief about the state of the world. The standard setting assumes the sender has full information about the receiver, but this may not always be the case. The line of work on robust Bayesian persuasion \citep[\eg][]{dworczak2022preparing, parakhonyak2022persuasion, hu2021robust} and Bayesian persuasion with an informed receiver \citep[\eg][]{kolotilin2017persuasion} takes the other extreme, assuming the sender must construct a messaging policy with only limited information about the receiver. Such \emph{robust} messaging policies are more broadly applicable than those tailored to the receiver's type,  but extract less utility for the sender. Reality often lies between these two extremes; for example, the sender may be able to acquire additional information about the receiver through external sources, perhaps at a cost.

%We focus on a sender who wishes to learn the best messaging policy from either a fixed budget of experimentation rounds (\ie oracle queries), or subject to a cost that grows with the number of experiments carried out.

In the context of Bayesian persuasion, such settings may naturally arise when, \eg the sender is a large online marketplace who experiments on some fraction of its users -- which needs to be small to avoid disrupting overall sales. 
%
%However, the platform is limited in the number of users it can experiment on without disrupting overall sales. 
%
As another example,  a startup may test out its funding pitch on smaller venture capital firms before trying to persuade a larger firm to fund their business.
Likewise, a company may run a customer focus group 
before bringing a new product or service to market.
Again, such market research is subject to a cost-benefits tradeoff.

\xhdr{Our model and results.}
We focus on \emph{simulation oracles}, natural information sources in the context of Bayesian persuasion that simulate the receiver's response in a particular persuasion instance. The sender repeatedly queries the oracle according to some \emph{querying policy}, computes its messaging policy given the oracle's output, and messages according to this policy. The querying policy may be adaptive, \ie the next query may depend on the outcomes of the previous queries. Our goal is to design {algorithms to compute (i) the querying policy and (ii)} the messaging policy given the oracle's output, so as to (approximately) optimize sender's Bayesian-expected utility. Equivalently, our goal is to compute an (approximately) optimal Perfect Bayesian Equilibrium in the ``meta-game" between sender and receiver that includes the querying policy.

We focus on Bayesian persuasion with real-valued states and receiver types and binary receiver actions. 
%
%Our formalism, laid out in \Cref{sec:setting}, 
This setting 
%{(which is similar to the one studied in, \eg  \citet{kolotilin2017persuasion})} 
captures scenarios in which the receiver has a single choice to make (\eg buy/don't buy, accept/reject) and there is a total ordering over states of the world (\eg product quality) and likewise over receiver types (\eg levels of skepticism). 

{Our algorithms achieve multiplicative $(1-\epsilon)$-approximation (for sender's Bayesian-expected utility) in running time polynomial in the input size and $1/\epsilon$ (and no additional dependency of the query budget).\footnote{In the language of approximation algorithms, a ``fully polynomial-time approximation scheme" (FPTAS).}
The input size in our problem is dominated by the explicit representation of the Bayesian prior as a states $\times$ types table. We also obtain a polynomial-in-$1/\epsilon$ running time (even for infinitely many states or types) after certain rounding operations on the prior are implemented. }
%\ascomment{comment on input size/representation.}

%\ascomment{The next para seems to delve into "our solution/challenges", while interleaving it with "our setup".  Instead, I'd spend 1-2 para on setup, tying together querying policy (rule), oracle interaction, and messaging policy (rule). (Let's assume for now that the simulation oracle is already defined above.)}

%
%\asdelete{From an algorithmic perspective, our goal is to optimize the sender's Bayesian-expected utility in the perfect Bayesian equilibrium of the game.}
%
Our technical analysis is mainly concerned with optimizing the
querying policy. 
%
%In order to do so, 
We first characterize the sender's optimal messaging policy, given arbitrary belief 
%an arbitrary set of beliefs 
about the receiver's type (\Cref{sec:optimal.policy}).  
Thus, more fine-grained beliefs are now directly connected to higher expected sender utility.
%
%As more fine-grained knowledge about the receiver's type leads to higher expected sender utility, the sender faces an optimization problem over potential adaptive query strategies. 
%
%To compute an approximately-optimal 
To optimize the 
querying policy (\Cref{sec:opt_query_policy}), we design a rounding scheme over receiver types and states such that optimizing the sender's querying policy over this rounded joint distribution leads to only a small loss in expected utility. 
We then show how to compute the optimal \emph{non-adaptive} querying policy
%(\ie the policy which would be optimal if the sender had to make all queries up front) 
over the rounded distribution.% 
\footnote{A non-adaptive querying policy is one which specifies all queries up front.}
The final ingredient is a general ``reduction'' from adaptive to non-adaptive querying policies.
%before applying a reduction to compute the optimal \emph{adaptive} querying policy. 
%

We  also highlight the difficulties of going beyond our main model (see~\Cref{sec:general}).
In particular, we consider more general 
%types of queries in~\Cref{sec:partition}, 
query types, and prove that the resulting problem formulation is NP-hard. Moreover, we explain how the problem gets more challenging for more general state/action spaces.

\subsection{Additional Related Work}\label{sec:related}
\xhdr{Bayesian Persuasion (BP)}
was introduced by~\cite{kamenica2011bayesian}, and has been extensively studied since then, see~\cite{kamenica2019bayesian} for a recent survey.
Our model of BP with real-valued states and receiver types and binary actions is especially related to that of~\citet{kolotilin2017persuasion}, who focus on linear model of receiver utility and partial sender-receiver alignment, characterizing optimal message policies.

%An especially relevant direction is Robust BP, which 
Robust Bayesian persuasion 
aims to relax the assumptions on the information the sender has about the receiver \citep{dworczak2022preparing, hu2021robust, parakhonyak2022persuasion, kosterina2022persuasion, zu2021learning}.
In particular, \cite{dworczak2022preparing} and \cite{hu2021robust} assume the receiver has an exogenous information source that the sender is uncertain about.
\cite{parakhonyak2022persuasion} posit that the sender is uncertain about the true distribution over states, the receiver's beliefs, and the receiver's utility function.
\cite{kosterina2022persuasion} also assume that the receiver's prior belief is unknown to the sender.
Finally,~\cite{zu2021learning} posit that neither sender nor the receiver(s) know the distribution of the payoff-relevant state.
This line of work typically focuses on characterizing the ``minimax'' messaging policy
(\ie one that is worst-case optimal over the sender's uncertainty), while our focus is on using oracle queries to help the sender overcome her uncertainty. 
Our setting is somewhat similar to \emph{online Bayesian persuasion} \cite{castiglioni2020online, castiglioni2021multi, bernasconi2023optimal}, where the sender interacts with a sequence of receivers.
%
%In this line of work, it is usually assumed that the sender faces an 
%
In this setting, one usually assumes an \emph{adversarially chosen} sequence of receivers and minimizes sender's regret: cumulative over rounds and worst-case over receiver types. In contrast, in our setting all ``rounds" (\ie oracle queries) refer to a single receiver whose type is chosen from a known prior, and the goal is to optimize sender's Bayesian-expected utility in one Bayesian persuasion game after all the queries. Regret guarantees in online Bayesian persuasion tend to have a square-root dependence on the number of rounds and the number of types, whereas our approximation guarantees do not depend on either. (In particular, our approximation factor is entirely determined by computational considerations.) Finally, prior work on online Bayesian persuasion tends to be computationally inefficient: exponential in terms of the number of states and types.

A recent paper of \citet{lin2024information}, which appeared after (and is follow-up work with respect to) our initial working paper, studies a setting very similar to the online variant of ours.
\footnote{Their paper and ours have  been published on \texttt{arXiv.org} in October 2024 and November 2023, respectively.}
%We emphasize that it is follow-up work with respect to our initial working paper: they've been published online in October 2024 and November 2023, respectively. 
One notable difference in results is that their regret guarantee depends on the number of rounds, whereas our approximation guarantee does not depend on the number of queries. 

%the  sender has a prior over the type of receiver they are interacting with. 
%
 
%
%our goal to use the oracle queries to learn the optimal messaging policy.
%

%
\xhdr{Pure Exploration} Our work is also conceptually similar to the line of work on \emph{pure exploration} in various bandit settings, where the first $K$ rounds of a bandit problem are used to explore the arms \citep{Tsitsiklis-bandits-04,EvenDar-icml06,Bubeck-alt09,Audibert-colt10}.
Hence, one could interpret our setting as a pure-exploration analog of online Bayesian persuasion (with a persistent receiver type drawn from a prior).

One could, in principle, apply a standard reduction from regret minimization to pure exploration using an algorithm from prior work on online Bayesian persuasion (i.e., run the algorithm, then select a messaging policy uniformly-at-random from the history). While this would yield a valid solution for our problem, this solution would suffer from the drawbacks noted above.\footnote{That is, it would be a computationally inefficient solution with approximation guarantees that  are worst-case over types (rather than Bayesian) and have square-root dependence on \#types and \#queries (as opposed to no dependence).}

\xhdr{Simulation in Other Games}
\cite{kovarik2023game} study a normal-form game setting in which one player can simulate the behavior of the other.
%
%They find that introducing simulation makes computing equilibrium easier and can increase trust between agents.
%
In contrast, we study simulation in Bayesian persuasion games, which are a type of \emph{Stackelberg} game~\citep{von2010market, conitzer2006computing}.
%
%As a result, while our motivations are similar, our findings and analysis differ significantly from theirs.
%
There is a line of work on learning the optimal strategy to commit to in Stackelberg games from query access~\citep{letchford2009learning, peng2019learning, blum2014learning, balcan2015commitment}.
However, the type of Stackelberg game considered in this line of work is different from ours.
In this setting, the leader (the leader is analogous to the sender in our setting) specifies a mixed strategy over a finite set of actions.
In contrast, in our setting the sender commits to a messaging policy which specifies a probability distribution over actions for every possible state realization.

%
%Finally, our work is broadly related to the literature on learning from revealed preference feedback (e.g.~\cite{roth2016watch, amin2015online, balcan2014learning, beigman2006learning, zadimoghaddam2012efficiently}).

Finally, some recent work on online learning in Stackelberg games \citep[\eg][]{bernasconi2023optimal, balcan2025nearly} has variants that apply to online Bayesian persuasion, albeit with the same limitations as discussed above.

% While there are known reductions from online learning in certain Stackelberg games to online Bayesian persuasion \citep[\eg][]{bernasconi2023optimal, balcan2025nearly}, these reduction only holds when the follower/receiver type is \emph{adversarially} chosen from some finite set at \emph{each time-step}. 
% %
% It is also assumed that the number of states is finite (an assumption we do not make). 
% %
% In contrast, we study a setting in which the receiver's type is drawn \emph{stochastically} from some (possibly continuous) distribution, and remains the \emph{same} as the sender queries. 
% %
% As such, while the line of work on learning the optimal strategy in Stackelberg games is indeed related, these results are not applicable in our Bayesian persuasion setting. 
\section{Setting and Background}\label{sec:setting}
We study a Bayesian persuasion game between a sender and an informed receiver (steps 2-5 in Figure~\ref{fig:our-model}).
There is a real-valued state of the world $\omega\in\Omega \subseteq [0,1]$,
revealed to the sender but not to the receiver. 
While the receiver does not see the state $\omega$, they do observe a \emph{private signal} $s \in \cT \subseteq [0,1]$, not visible to the sender.
%, drawn from a \emph{type space} $\cT \subseteq [0,1]$.  The realized signal $s$ determines the receiver's \emph{type}.
%$\cT := \mathcal{S} \times \cU$. 
%
%We think of the signal realization $s$ as correlated with $\omega$, hence the reason the receiver is \emph{informed}. 
%
Formally, the (state, type) pair $(\omega, s)$ is drawn from a joint distribution $F$ over $\Omega \times \cT$, called the \emph{joint prior}.
%
%The distribution $F$ 
It is known to both the sender and the receiver.  
We associate the realized signal $s$ with the receiver's \emph{type}, and refer to $\cT$ as the \emph{type space}.
We use $\cP$ to denote the marginal distribution over types.
%, and assume that the sender has sampling access to it.
%
%The receiver does not know $\omega$, but knows their (realized) private type $s \sim \cP$. 
%
Crucially, $s$ can be correlated with $\omega$, which is why we refer to the receiver as \emph{informed}.
%
%The sender observes $\omega$, but does not observe the receiver's realized type $s$.
%
For ease of exposition we will assume for now that $\Omega$ and $\cT$ are finite sets, but we will relax that assumption in Appendix~\ref{sec:continuous}.

After receiving a message from the sender (described below), the receiver takes a binary action $a \in \cA = \{0,1\}$.
We will tend to describe action $1$ as ``taking the action'' and action $0$ as ``not taking the action."\footnote{We discuss an extension beyond binary actions in~\Cref{sec:non-binary}.} 
The sender and receiver's utilities, described by utility functions $u_S, u_R : \Omega \times \cP \times \cA \rightarrow \mathbb{R}$, are then uniquely determined by the state $\omega$, the receiver's type $s$, and the receiver's action $a$.
%\in \cA = \{0, 1\}$.
%
%It is w.l.o.g. for $u_S(\omega,a)=a$.
We assume the sender's utility is given by $u_S(\omega,s,a) = a$, meaning that the sender always prefers that the receiver take the action regardless of the state of the world and the receiver's type.
The receiver's utility is given by $u_R(\omega,s,a) = a \times u(\omega,s)$.  That is, the receiver always obtains utility $0$ when not taking the action, and utility $u(\omega,s)$ when taking the action.  We assume that $u(\omega,s)$ is weakly increasing in $\omega$, meaning that higher states of the world correspond to weakly higher benefits to the receiver for taking the action.

We make the following two further assumptions about the receiver's type:
\begin{assumption}\label{ass:type1}
    Receiver utility $u(\omega,s)$ is weakly increasing in $s$.  In other words, agents of higher type have weakly higher value for taking the action, for any given state of the world.
\end{assumption}
\begin{assumption}\label{ass:type2}
    The joint prior $F$ over pairs $(\omega,s)$ satisfies an increasing differences property: for any $s_1, s_2 \in \cT$ with $s_1 < s_2$ and any $\omega_1, \omega_2 \in \Omega$ with $\omega_1 < \omega_2$, $$\Pr[\omega_2 | s_2] - \Pr[\omega_1 | s_2] \geq \Pr[\omega_2 | s_1] - \Pr[\omega_1 | s_1].$$  In particular,  higher signal realizations are positively correlated with higher states of the world.
\end{assumption}
Informally, these assumptions imply that higher-type receivers are more optimistic that the state of the world is high, and have higher expected utility for taking the action given any fixed belief about the state of the world.
%is unknown to the sender, but she prefers action $a = \omega$ to $a \neq \omega$, so that $u_R(1, 1) > u_R(1, 0)$ and $u_R(0, 0) > u_R(0, 1)$. 
%
%Let $\cU$ denote the space of all such utility functions.\looseness-1 

We note that receiver type plays two roles in this model: it influences receiver belief through its correlation with the state of the world, and it directly influences receiver utility.  Two special cases of the model are worth emphasizing.  First, if $F$ is a product distribution, then $\omega$ and $s$ are uncorrelated and the signal realization has no influence on the receiver's belief about $\omega$.  In this case, all receiver types share the same prior over $\omega$, but differ in their utility functions.  Second, if the receiver utility function $u(\omega,s)$ is independent of $s$, then all receiver types share the same utility function but may differ in their beliefs about the distribution of $\omega$.

%While the receiver does not see the state $\omega$, they receive a \emph{private type} $(s, u_R) \in \cT$, not visible to the sender, from some (possibly infinite) \emph{type space} $\cT := \mathcal{S} \times \cU$. 
%
%(Here $s$ is a signal which is correlated with $\omega$, hence the reason the receiver is \emph{informed}.) 
%
%The (state, type) pair is drawn from a joint distribution $F$. 
%
%We use $\cP$ to denote the distribution over types, and assume that the sender has sampling access to it.
%
%The receiver does not know $F$, but knows their (realized) private type $(s, u_R) \sim \cP$. 
%
%\footnote{This assumption is strictly weaker than the canonical assumption made in Bayesian persuasion, which posits that there is no receiver type and the distribution over states is known to both players.\looseness-1}\looseness-1 

The sender sends a message $m=\sigma(\omega)\in\cM$ to the receiver, according to some randomized \emph{messaging policy} $\sigma: \Omega \rightarrow \cM$. 
Importantly, the sender announces $\sigma$ before seeing the state. 
The receiver then takes an action $a \in \cA = \{0, 1\}$.
The sender and receiver utilities are then realized by $(\omega,s,a)$ according to utility functions $u_S,\, u_R$, respectively. 
Thus, the receiver chooses an action to maximize its Bayesian-expected utility given the joint prior $F$, her type, the messaging policy, and the realized message. We assume that the receiver breaks ties in favor of taking the action $a=1$.

\paragraph{Querying the Oracle}
The sender in our model has access to a \emph{simulation oracle} that simulates the receiver's response given the \emph{realized} private signal $s$. 
In reality, a simulation oracle may be a metaphor for running experiments on users or predicting the receiver's behavior using a machine learning model. 
Indirectly, this oracle yields information on $s$, helping the sender improve her messaging policy. Below we explain how the oracle fits into our Bayesian persuasion model.\looseness-1
\begin{definition}[Simulation Oracle]
\label{def:sim}
A simulation oracle inputs a query $q=(\sigma_q, m_q)\in\mathcal{Q}$, %consisting of a tuple $(\sigma_q, m_q)$,
where $\sigma_q$ is a messaging policy, $m_q \in \cA$ is a message, and $\cQ$ is the set of all possible simulation queries.
    The oracle returns the receiver's  best-response given the joint prior $F$ and the realized private signal $s$, \ie $a_{q} = \arg\max_{a \in \cA} \mathbb{E}_{\omega}\sbr{u_R(\omega, s, a) \mid m_q,s}$.
    %
    % \begin{equation*}
    %     a_{q} = \arg\max_{a \in \cA} \mathbb{E}_{\omega}[u_R(\omega, a) | m_q,s].
    % \end{equation*}
    %
\end{definition}
%
%We additionally consider imperfect simulation oracles %in~\Cref{sec:equilcomp}. 
%in~\Cref{sec.approx}.
%
We assume the sender makes $\leq K$ oracle queries before choosing the messaging policy.
%\footnote{Another natural model is one where the sender can make unlimited queries, but each query comes at a cost.  Our results readily extend to this model---see~\Cref{app:costly.pbe} for more details.}
The sender makes queries adaptively, choosing each query based on the responses to the previous ones.  Formally, a \emph{query history} $H\in \mathcal{H}$ is a finite (possibly empty) sequence of query-action pairs. The sender follows some \emph{querying policy}, a function $\pi \colon \mathcal{H} \to \mathcal{Q}$ that maps query history to a next query. The sender has a \emph{messaging policy rule} that maps the final query history $H$ to the messaging policy $\sigma_H$. The full game protocol is summarized in Figure~\ref{fig:our-model}.\looseness-1

\begin{remark}
While the simulation oracle returns \emph{exact} best response, as per \Cref{def:sim}, we also consider imperfect simulation oracles, see in~\Cref{sec.approx}. Moreover, we study a natural problem variant where the sender can make unlimited queries, but each query comes at a cost;  our results readily extend to this variant, see~\Cref{app:costly.pbe}.
\end{remark}

The query history $H$ has no dependency on $\omega$ after conditioning on type $s$.  Since the messaging policy $\sigma^*_H$ is observable by the receiver, history $H$ has no further bearing on the receiver's beliefs, utility, or choice of action. In particular, it does not matter to the receiver whether the querying policy and messaging policy rule are revealed.

The designer's problem here is to compute the querying policy and the messaging policy so as to (approximately) optimize sender's Bayesian-expected reward from the corresponding Bayesian persuasion game. More specifically, we need to specify two algorithms for the sender: an algorithm to compute the querying policy (which produces the querying policy which, after being executed, produces history $H$) and another algorithm to compute the messaging policy given $H$. 

%Formally, a \emph{history} $H$ is a finite (possibly empty) sequence of query-action pairs.  We write $\mathcal{H}$ for the set of all histories.  We denote the sender's \emph{querying policy} by the function $\pi \colon \mathcal{H} \to \mathcal{Q}$ that maps a sequence of past queries and responses to a next query.

\xhdr{Notation.} %The number of signals is $T = |\mathcal{S}|$.
When $|\cT| < \infty$, we use the shorthand $T := |\mathcal{T}|$. 
Given messaging policy $\sigma_H$, we define $\sigma_H(m|\omega) := \Pr\sbr{\sigma_H(\omega)=m}$, and use $m \sim \sigma_H(\omega)$ to denote a message sampled from $\sigma_H$ when the state is $\omega \in \Omega$. 
Abusing notation, we let 
%$\pi(s, u_R)$ 
$\pi(s)$
be the final query history generated by querying policy $\pi$ when the receiver's type is $s \in \cT$.
%and her utility function is $u_R$.\looseness-1

%until its termination, where $s \in \mathcal{S}$ is the receiver's signal.

\begin{figure}[t]
   \centering
    \noindent \fbox{\parbox{\columnwidth}{
    \textbf{Bayesian Persuasion with Simulation Queries}
    \begin{enumerate}
    \setlength{\itemsep}{0pt}
    \setlength{\parsep}{0pt}
    \setlength{\topsep }{0pt}
    \item Sender uses querying policy $\pi$ to query the oracle up to $K$ times, resulting in query history $H$;\looseness-1
    \item Sender commits to a messaging policy $\sigma_H$, which is visible to the receiver;
    \item State $\omega$ is revealed privately to the sender;
    \item Message $m \sim \sigma_H(\omega)$ is sent to the receiver;
    \item Receiver chooses an action 
    %$a = a(m,s,u_R)\in\cA$.
    $a = a(m,s)\in\cA$.
    \end{enumerate}
    }}
    \caption{Interaction protocol for our setting.}
    \label{fig:our-model}
\end{figure}

%The timing of the game is summarized in \Cref{fig:our-model}.

% \asedit{We have a ``metagame", where the sender's ``strategy" consists of the querying policy $\pi$ and the messaging policy rule $\sigma$, and the receiver's ``strategy" is the \emph{action rule} $a \colon \mathcal{M} \times \mathcal{S} \to \mathcal{A}$ mapping a message and a signal to an action. The strategy profile is denoted
%     $(\pi, \sigma, a)$.
% The players have beliefs, generated according to \emph{belief rules}.
% Sender's belief rule $B_S \colon \mathcal{H} \to \Delta(\mathcal{S})$ maps query histories to distributions over receiver signals. Receiver's belief rule $B_R \colon \mathcal{M} \times \mathcal{S} \to \Delta(\Omega)$ maps messages and signals to distributions over the state $\omega$. We study \textbf{Perfect Bayesian equilibria (PBE)} of this metagame.\looseness-1}
%

\xhdr{Meta-game.} 
A useful alternative characterization of our setting is
%One can think of our setting 
as a ``meta-game" between the sender and receiver, where the sender's strategy consists of their querying policy $\pi$ and messaging policy rule $\sigma$, and the receiver's strategy is their \emph{action rule} 
$a \colon \mathcal{M} \times \cT \to \mathcal{A}$ mapping a message and type to an action. The strategy profile is denoted $(\pi, \sigma, a)$.
%$a \colon \mathcal{M} \times \mathcal{S} \times \mathcal{U} \to \mathcal{A}$ mapping a message, signal, and utility function to an action. The strategy profile is denoted $(\pi, \sigma, a)$.
%
The players also have beliefs, generated according to \emph{belief rules}.
The sender's belief rule $B_S \colon \mathcal{H} \to \Delta(\cT)$ maps query histories to distributions over receiver types.
The receiver's belief rule $B_R \colon \mathcal{M} \times \cT \to \Delta(\Omega)$ maps messages and signals to distributions over the state $\omega$.
Our objective is to compute an (approximate) Perfect Bayesian equilibrium of this meta-game.

%In our game, a strategy for the sender is a querying policy $\pi$, plus a messaging policy rule $\sigma$ that maps potential query histories to messaging policies (where we write $\sigma_H$ for the messaging policy corresponding to query history $H$).  A strategy for the receiver is an action rule $a \colon \mathcal{M} \times \mathcal{S} \to \mathcal{A}$ mapping a message and a signal to an action.

\begin{definition}[Perfect Bayesian Equilibrium]
%    A profile of strategies $(\pi^*, \sigma^*, a^*)$, together with a belief rule $B_S \colon \mathcal{H} \to \Delta(\mathcal{S})$ for the sender mapping query histories to distributions over receiver signals, plus a belief rule $B_R \colon \mathcal{M} \times \mathcal{S} \to \Delta(\Omega)$ for the receiver mapping messages and signals to distributions over the state of the world,
A strategy profile $(\pi^*, \sigma^*, a^*)$ and belief rules $B_S$, $B_R$
form a Perfect Bayesian equilibrium if
%\textbf{Perfect Bayesian equilibrium (PBE)} if:
\end{definition}
%\textcolor{red}{[NSI: todo, define me]}
\begin{enumerate}
    \setlength{\itemsep}{0pt}
    \setlength{\parsep}{0pt}
    \setlength{\topsep }{0pt}

    \item For each $m \in \mathcal{M}$ and $s \in \cT$, action $a^*(m, s)$ maximizes the receiver's expected utility given belief $B_R(m, s)$, i.e.
    $a^*(m, s) \in \arg\max_a \,
    \mathbb{E}_{\omega \sim B_R(m, s)}\sbr{ u_R( \omega, s, a) }$.
    \item Belief $B_R(m,s)$ is the correct posterior over $\omega$ given $s$, $\sigma^*_H$, and the fact that $\sigma^*_H(\omega) = m$.\footnote{It is possible that some pairs $(m,s)$ may have probability $0$, in which case $B_R(m,s)$ can be arbitrary.  We note that since $m$ is generated according to $\sigma^*_H$, which is known to the receiver, pairs $(m,s)$ of probability $0$ cannot occur even off the equilibrium path of play.\looseness-1}
    \item For each $H \in \mathcal{H}$, messaging policy $\sigma^*_H$ maximizes the sender's expected utility given belief $B_S(H)$, i.e.
    $\sigma^*_H \in \arg\max_{\sigma}\, 
    \mathbb{E}_{s \sim B_S(H),\; \omega \sim F|_s}\sbr{ u_S( \omega, a^*(\sigma(\omega), s)) }$.
    \item Belief $B_S(H)$ is a correct posterior over $s$, given $\pi^*$ and that $\pi^*$ generates history $H$.
    \item Sender's querying policy $\pi^*$ maximizes the sender's expected utility given $\sigma^*$ and $a^*$, i.e.
    $\pi^* \in \arg\max_{\pi}\;
    \mathbb{E}_{(\omega,s) \sim F,\; H \sim \pi }\sbr{ u_S( \omega, a^*(\sigma^*_H(\omega), s) ) } $.
\end{enumerate}
%

%We also note that
Optimality of the querying policy $\pi^*$ implies that, given any \emph{partial} history $H$ of $< K$ queries, the subsequently chosen query $\pi^*(H)$ must also be utility-optimizing for the sender given the posterior over signal $s$ induced by $H$.
%And while we do not impose the restriction explicitly, we note that
Also, since the generation of histories $H$ is mechanical (given the choice of $\pi^*$ and the realization of $(s, u_R)$), any history $H$ that is inconsistent with any realization of $s$ will have probability $0$ of being observed, even off the equilibrium path of play.

\paragraph{Discussion.}
We note that we assumed the receiver has a private signal correlated with the state and thus knows something about the state that the sender does not at the beginning of the game. 
Such a setting is sometimes motivated by scenarios where the receiver has access to news that the sender does not have access to when designing her messaging policy. 
It can be interesting to consider what happens if the receiver has ``fake'' news but acts as if her news is true. 
One way to model this is that two state-signal pairs are drawn independently: $(\omega,s), (\omega',s') \sim F$, where $\omega$ is the true state but the receiver observes signal $s'$ and (incorrectly) infers that $\omega$ is distributed as $F|_{s'}$.  All of our results carry through to this setting with minor modifications to the algorithms.\looseness-1

\subsection{Preliminaries}\label{sec:think}

% We begin with structural results which showcase intuition and will be useful later. We first show that any messaging policy is outcome-equivalent to one with just $T+1$ messages (recall that $T = |\mathcal{S}|$).  We then use the underlying arguments of this proposition to derive a useful geometric property of simulation queries.  Namely, we argue any simulation query is equivalent to a threshold that separates receivers into those whose belief in state $\omega=1$ is greater than a threshold $p \in [0,1]$ and those for whom it is less than $p$.

%\subsection{Optimal Persuasion Under Uncertainty}\label{sec:unknown_prior}

Our setting exhibits the following structure.  
Since the action space is binary, for any messaging policy and any message sent according to that policy, the receiver types are partitioned into two sets: those that induce action $a=1$ in response to the message and those that induce action $a=0$. 
Furthermore, 
%since the state is binary, 
since $\cT \subseteq [0,1]$, the heterogenity among possible receiver types is single-dimensional and totally-ordered by real value. 

\begin{lemma}
\label{lem:monotone}
    Fix any PBE $(\pi^*,\sigma^*,a^*)$.  Then for any message $m$ in the support of $\sigma^*$ and any $s_1, s_2 \in \cT$ with $s_1 < s_2$, we have $a^*(m,s_1) \leq a^*(m,s_2)$.
%If $s_1 < s_2$ and receiver type $s_1$ takes action $a=1$, then receiver type $s_2$ also takes action $a=1$. 
\end{lemma}
\begin{proof}
    The result is immediate if $a^*(m,s_1) = 0$, so assume that $a^*(m,s_1) = 1$.  Our aim is to show that $a^*(m,s_2) = 1$.  Since $\mathbb{E}_{\omega \sim B_R(m, s_2)}[u_R(\omega, s_2, 0)] = 0$, it suffices to show that $$\mathbb{E}_{\omega \sim B_R(m, s_2)}[u_R(\omega, s_2, 1)] \geq 0.$$
    %Since $s_1$ takes action $a=1$, we know that 
%    The definition of PBE then implies
%    $$\mathbb{E}_{\omega \sim B_R(m, s_1)}[u_R(\omega, s_1, 1)] \geq \mathbb{E}_{\omega \sim B_R(m, s_1)}[u_R(\omega, s_1, 0)] = 0.$$
    %
    %Since $\mathbb{E}_{\omega \sim B_R(m, s_1)}[u_R(\omega, s_1, 0)] = \mathbb{E}_{\omega \sim B_R(m, s_2)}[u_R(\omega, s_2, 0)] = 0$, 
%    Since $\mathbb{E}_{\omega \sim B_R(m, s_2)}[u_R(\omega, s_2, 0)] = 0$, to show that $a^*(m,s_2) = 1$ it suffices to show that $$\mathbb{E}_{\omega \sim B_R(m, s_2)}[u_R(\omega, s_2, 1)] \geq 0.$$
    %
    This follows from the fact that 
    $$\mathbb{E}_{\omega \sim B_R(m, s_2)}[u(\omega, s_2)] \geq \mathbb{E}_{\omega \sim B_R(m, s_1)}[u(\omega, s_2)] \geq \mathbb{E}_{\omega \sim B_R(m, s_1)}[u(\omega, s_1)] \geq 0,$$
    where the first inequality follows from Assumption~\ref{ass:type1}, the second from Assumption~\ref{ass:type2}, and the third from the fact that $a^*(m,s_1) = 1$ and the definition of PBE.
\end{proof}

%This result immediately implies that receiver actions are monotone in the type space, for any messaging policy. 
%
%Moreover, if the sender makes a simulation query $q = (\sigma_q, m_q)$ such that $\mathbb{E}_{\omega \sim B_R(m_q, s)}[u(\omega, s)] = 0$, any receiver with type $s' > s$ will take action $a=1$ and any receiver with type $\Tilde{s} < s$ will take action $a=0$ as long as there exists some state $\omega \in \mathrm{supp}(B_R(m_q, s))$ such that such that $u(\omega, \Tilde{s}) < u(\omega, \Tilde{s})$. 

This result implies that there always exists a simulation query $q$ that can distinguish between any two receivers who do not behave identically on every message.
%\khedit{which satisfy the above condition.} \khdelete{with thresholds $\beta_{i}$, $ \beta_{j}$ such that $\beta_{i} \neq \beta_{j}$.} 
%
Furthermore, any simulation query implies a partition on receiver types defined by a threshold.\footnote{Interestingly this is not true in more general settings, as we will discuss in~\Cref{sec:non-binary}.\looseness-1} 
We will use this fact in Section~\ref{sec:equilcomp} to argue that the optimal querying policy of the sender can be characterized as choosing a set of thresholds.
\section{Equilibrium Computation}
\label{sec:equilcomp}

We now develop an algorithm that computes an $O(\epsilon)$-sender-optimal equilibrium of the meta-game in time $O(\mathrm{poly}(1/\epsilon))$. 
Our approximation is additive: the sender's payoff under the calculated policies will be at least the optimal payoff minus $O(\epsilon)$. 
Our approach relies on three steps and is summarized in~\Cref{alg:total}.  
First, we discretize the state space and receiver type space to at most $O(1/\epsilon)$ entries each, by 
%sampling from the distribution over types $\cP$ and 
rounding the 
%corresponding empirical 
joint distribution 
%$\cP'$ 
appropriately. 
%First, we discretize the receiver type space by sampling from the distribution over types $\cP$ and rounding the corresponding empirical distribution $\cP'$ appropriately. 
%
(This step may be skipped if the number of states and/or types is small.)
Using this rounded type distribution, we compute an $O(\epsilon)$-optimal \emph{non-adaptive} querying policy for $\min\{\frac{1}{\epsilon}, 2^K - 1\}$ queries, i.e. a querying policy which would be $O(\epsilon)$-optimal if the sender had to make $\min\{\frac{1}{\epsilon}, 2^K - 1\}$ queries \emph{up front}. 
Finally, we show a reduction from optimal non-adaptive querying policies to optimal adaptive querying policies to get an $O(\epsilon)$-optimal adaptive querying policy for $K$ queries. 

Our methods for computing an approximately optimal querying policy crucially make use of the sender's optimal messaging policy for a discrete set of states and receiver types. 
As such, we first characterize the sender's optimal messaging policy for this setting and show how it may be computed in polynomial time. 
Fixing the actions of the sender, the receiver's best response is simply a Bayesian update and hence is computable in constant time. 

We conclude this section by discussing settings where the oracle is an imperfect representation of the receiver's type, as is the case when, e.g. the sender is using an imperfect machine learning model to simulate the receiver's response. 

\subsection{Optimal Messaging Policy for Finite Types and States of the World}
\label{sec:optimal.policy}
%\xhdr{Optimal messaging policy}
%
We begin by characterizing the sender's optimal messaging policy whenever she has uncertainty about the receiver's type.
%and $|\cT| < \infty$. 
%
The optimal messaging policy has at most $|\Omega|$ messages, each with the following interpretation:
There is a message $m^*$ with a threshold receiver type 
$s^*$ 
such that a receiver with type $s^*$
is indifferent between action and inaction.
All receivers with type $s$
such that $s > s^*$
should take action $a=1$ upon receiving message $m^*$, and those for which $s < s^*$
should take action $a = 0$.
%

%There is at most one other message, and there are three cases for the behavior it induces: it either indicates that \emph{all} receivers should take action $a=1$, that \emph{no} receivers should take action $a=1$, or it is a threshold message like $m^*$ but with a different threshold.\looseness-1
%There is at most one other message, and there are two cases for the behavior it induces: it either indicates that \emph{all} receivers should take action $a=1$, or that \emph{no} receivers should take action $a=1$.
%In order to simplify the exposition, we assume that $p_{\tau} \leq \frac{1}{2}$ for every receiver type $\tau \in \cT$.\footnote{This assumption is without loss of generality, since the sender can treat all receiver types with prior $\mathbb{P}(\omega = 1) \geq \frac{1}{2}$ as the same type without any loss in utility. \textcolor{blue}{[BJL: We now think this is false]}} \textcolor{red}{[NSI: i see we have a clash of terminology re type vs signal and, notationally, $\tau$ vs $s$.  I prefer signal because I think it helps the reader with the right interpretation, but I will stop changing it here until we come to an agreement on that.]}

\begin{restatable}{proposition}{optsp}[Optimal Messaging Policy]\label{prop:opt-binary}
% For a given set of receiver types $\{(p_i,\beta_i)\}_{i \in [L,H]}$ of size $T = |\{(p_i,\beta_i)\}_{i \in [L,H]}|$ with $\beta_L > \beta_{L+1} > \dotsc > \beta_H$, the sender's optimal messaging policy can be computed in time $\mathcal{O}(T^2)$ and has non-zero mass on at most two messages.\looseness-1 
%
For a given (finite) set of $T$ receiver types 
%$\{s_i\}_{i \in [L,H]}$ of size $T$ with $s_L > s_{L+1} > \dotsc > s_H$, 
and finite set $\Omega$ of $W$ possible states of the world, the sender's optimal messaging policy can be computed in time $\mathcal{O}((WT)^{2.5})$ and has non-zero mass on at most $W$ messages.\looseness-1
\end{restatable}
\begin{proof}[Proof Sketch]
    The proof proceeds by applying a revelation principle-style argument to write the problem of computing the optimal messaging policy as a linear program over $W(T+1)$ variables, which correspond to the probabilities that messages $m_0, m_1, \dotsc, m_T$ are sent in each state. 
    These messages have the property that message $m_i$ induces receivers with the 
    lowest $i$ types
    %$\{\beta_1, \dotsc, \beta_i\}$ 
    to take action $a=1$. 
    We then further simplify the program by leveraging the total ordering over receiver types and the Rank Lemma to show that the optimal messaging policy uses at most $W$ messages.\looseness-1 
\end{proof}
We demonstrate the optimal messaging policy on an example in \Cref{fig:bin-dp}.  In this example there are two states of the world, $\Omega = \{0,1\}$, and five types, each with utility function $u_R(\omega, a) = \mathbf{1}\{\omega = a\}$ and posterior beliefs that assign probabilities $(p_1,p_2,p_3,p_4,p_5) = (0.5,0.4,0.3,0.2,0.1)$ to the event that $\omega=1$, with $(\cP(p_1),\cP(p_2),\cP(p_3),\cP(p_4),\cP(p_5)) = (0.2, 0.01, 0.39, 0.2, 0.2)$.
The blue solid line represents sender's utility as a function of the cutoff index when they make no queries. Note this is not monotone. As the sender targets a higher belief $p_i$, the total mass of targeted receiver beliefs $p\geq p_i$ decreases, lowering the sender's utility. However, the probability the messaging policy can induce the receiver to take action $a=1$, conditional on the belief exceeding the target $p$, increases, improving the sender utility.  This means the sender's optimal utility might be achieved by an intermediate target (as indicated by the blue dashed line in the figure), and further complicates the problem of identifying the optimal querying policy, which we address next.\looseness-1

\begin{figure}
    \centering
    \includegraphics[width=\columnwidth]{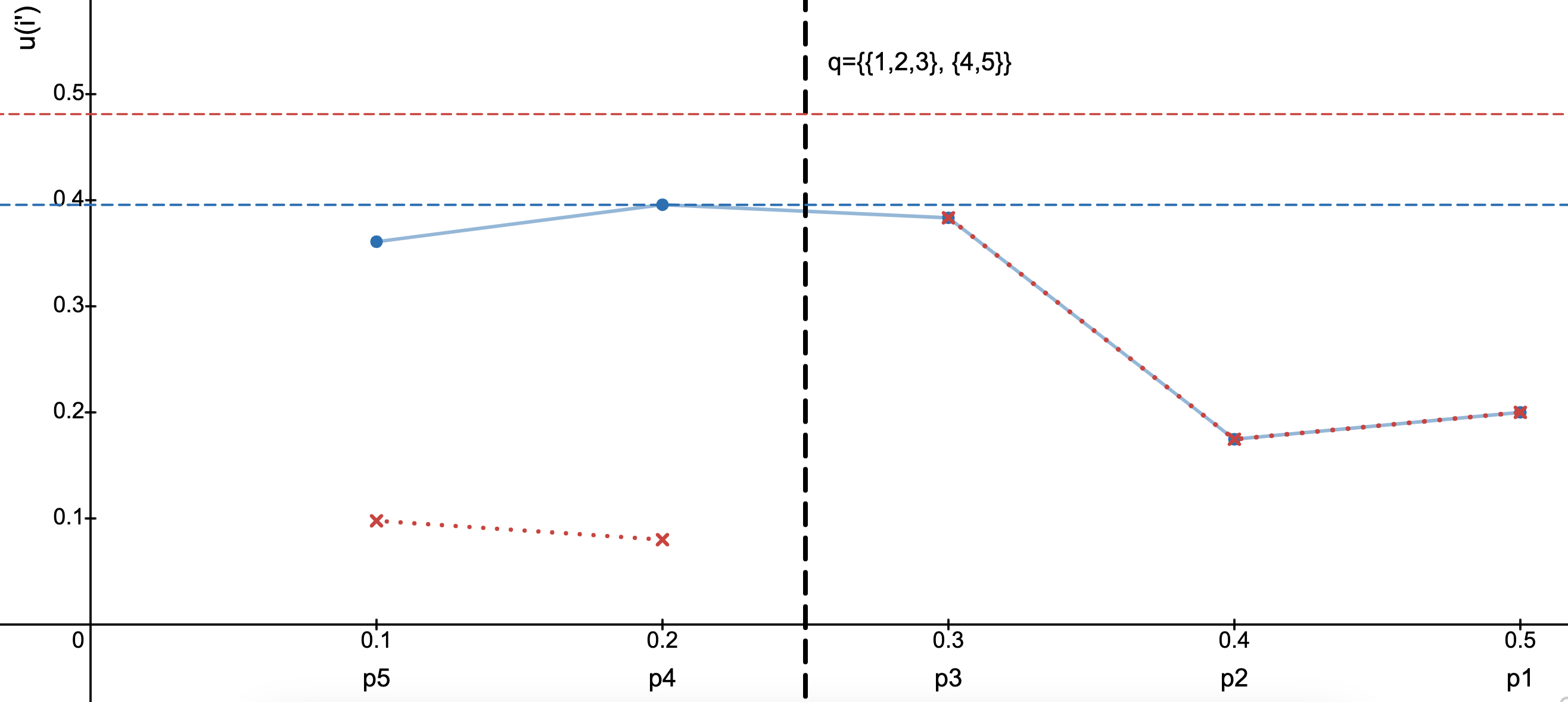}
    \caption{Sender's utility $u(i)$ as a function of cutoff belief $p_{i}$. The blue solid line is the sender's utility as a function of the cutoff index when they make no queries. The sender's optimal utility is given by the blue dashed line. The red dotted line represents sender's utility as a function of cutoff index when they make the simulation query $q$ which separates the three highest beliefs from the two lowest beliefs. The red dashed line denotes the sender's ex-ante utility from messaging optimally after making query $q$.\looseness-1
    }
    \label{fig:bin-dp}
\end{figure}

\subsection{$O(\epsilon)$-Optimal Querying Policy}
\label{sec:opt_query_policy}
%
%\xhdr{Optimal querying policy}
%
%Our goal is to identify an optimal querying policy.  Queries are helpful for the sender because they refine her information about the beliefs of the receiver.  Thus she can target her messaging policy to the receiver she actually faces.
%
Queries refine the sender's information about the receiver's type, so that the sender can better target her messaging policy.
\Cref{fig:bin-dp} demonstrates the impact of a single sender query separating five types (the three highest types from the two lowest ones). 
(In~\Cref{fig:bin-dp} all receivers have the same utility, and so types are equivalent to beliefs.)
The red dotted line represents the sender's utility as a function of the threshold belief in each resulting information set. This can be weakly less than the sender's utility before making the query for each individual threshold. 
However, since the sender's targeting ability improves, she is able to extract extra utility from low types, improving her overall expected utility ex ante, as represented by the red dashed line.

\xhdr{Bounding the Rounding Error for States of the World.}  Proposition~\ref{prop:opt-binary} requires that $\Omega$ be finite and introduces a runtime dependency on $|\Omega|$.  Our first step will be to discretize $\Omega$ to a small number of potential states, and have the sender's messaging policy only depend on the state rounded to the nearest point under this discretization.  To do so, we must bound the impact of such a rounding procedure on the value of the sender's optimal policy.  To this end, let $G$ denote some arbitrary distribution over $\Omega \times \cT$, with marginal distribution $G|_{\Omega}$ over states of the world.  For a fixed $\epsilon > 0$, we will define a discretization $\Omega^{(\epsilon)} \subseteq \Omega$ as follows.
%rounded distribution $G^{(\epsilon)}$ as follows.  
For each $0 \leq k \leq 1/\epsilon$, let $\omega_k$ denote the value at quantile $k\epsilon$ of $G|_{\Omega}$.\footnote{That is, if we write $H$ for the CDF of distribution $G|_{\Omega}$, then $\omega_k$ is the unique value satisfying $H(\omega) < k\epsilon$ for all $\omega < \omega_k$ and $H(\omega) > k\epsilon$ for all $\omega > \omega_k$.  Note that we may have $\omega_k = \omega_{k+1}$ for some $k$, and that we take $\omega_0 = \min\Omega$ and $\omega_{1/\epsilon}=\max\Omega$ by convention.}  Then $\Omega^{(\epsilon)} = \{\omega_0, \omega_1, \dotsc, \omega_{1/\epsilon}\}$.
%The distribution $G^{(\epsilon)}$ is simply $G$ with states of the world rounded down to the nearest $\omega_k$.  I.e., to sample from $G^{(\epsilon)}$, we draw $(\omega, s)$ from $G$, let $k$ be the largest value such that $\omega_k \leq \omega$, and return $(\omega_k, s)$.

We claim that applying the above discretization to states of the world has a small impact on the sender's maximum obtainable value.  To that end, define $V^*(G)$ to be the expected utility obtained by the seller under the optimal messaging policy for distribution $G$ over $\Omega \times \cT$.  Let $V^{(\epsilon)}(G)$ be the utility of the optimal message policy that, for each $0 \leq k < 1/\epsilon$, generates the same distribution of messages for all $\omega \in [\omega_k, \omega_{k+1})$. 
%.  We claim that changing the distribution of states and types from $G$ to $G^{(\epsilon)}$ does not impact the sender's optimal value by more than $O(\epsilon)$.

\begin{restatable}{lemma}{StateDiscretization}
\label{lem:state.discretization}
For any distribution $G$ over $\Omega \times \cT$, we have $V^{(\epsilon)}(G) \leq V^*(G) \leq V^{(\epsilon)}(G) + \epsilon$.
\end{restatable}
\begin{proof}
    The first inequality follows immediately from the fact that $V^*$ maximizes over a larger space of messaging policies than $V^{(\epsilon)}$.
    
    For the second inequality, let $\sigma^*$ be an optimal messaging policy that achieves value $V^*(G)$. %Given any $\omega \in \Omega$, let $\tilde{\omega}$ denote the value of $\omega_k$ such that $\omega \in [\omega_k, \omega_{k+1})$.  
    We define a modified messaging policy $\sigma^{(\epsilon)}$ as follows.  %On any input $\omega$ with $\omega > \omega_{(1/\epsilon) - 1}$, the policy returns message $m_0$. Otherwise, we have 
    Suppose the input is $\omega$ and $k$ is such that $\omega \in [\omega_k, \omega_{k+1})$.  If $k < (1/\epsilon)-1$, then messaging policy $\sigma^{(\epsilon)}$ first draws a random $\tilde{\omega}$ from $G|_\Omega$ restricted to the interval $[\omega_{k+1}, \omega_{k+2})$.  If instead $k = (1/\epsilon)-1$, then we instead draw $\sigma^{(\epsilon)}$ from $[\omega_0, \omega_1)$.  In either case, we simulate message policy $\sigma$ on input $\tilde{\omega}$ and send the resulting message.

    We note that $\sigma^{(\epsilon)}$ satisfies the condition that it assigns the same distribution of messages to all $\omega \in [\omega_k, \omega_{k+1})$ for all $0 \leq k < 1/\epsilon$.  Moreover, the marginal distribution of messages sent under $\sigma^{(\epsilon)}$ is identical to the messages sent under $\sigma^*$.  Let $E$ denote the (good) event that $\omega \leq \omega_{(1/\epsilon) - 1}$.  Conditional on event $E$, 
    %the marginal distribution of messages sent under $\sigma^{(\epsilon)}$ is identical to the distribution of messages sent under $\sigma^*$.  Finally, since 
    we have $\tilde{\omega} \geq \omega$ with probability $1$ %(again, conditional on $\omega$ not lying in the top $\epsilon$ quantile), 
    so all receiver types that would take action $a$ under a given message from $\sigma^*$ would also take action $a$ under that same message from $\sigma^{(\epsilon)}$.

    We conclude that, conditional on the good event $E$, the sender value generated by $\sigma^{(\epsilon)}$ is at least that of $\sigma^*$.  Since the bad event where $E$ does not occur has probability at most $\epsilon$, and the sender's utility is at most $1$ on any realization, we conclude that $V^{(\epsilon)}(G) \geq V^*(G) - \epsilon$ as required.
\end{proof}

Given Lemma~\ref{lem:state.discretization}, we will assume for the remainder of this section that $|\Omega| \leq 1/\epsilon$, which comes at a loss of at most $O(\epsilon)$ in the sender's obtained expected value.

%This applies to the case of finite $\Omega$, but (as we note in Section~\ref{sec:continuous}) can also be used to generalize to continuous states.  

\xhdr{Bounding the Rounding Error for Types.} Our approach likewise relies on a finite type space, with a runtime dependence on $|\cT|$. 
%on sampling from the distribution over receiver types, before (further) discretizing the resulting distribution. 
%
We will therefore discretize $|\cT|$ to a small number of types and study the sensitivity of our policies to the resulting errors in receiver types.
To this end, given marginal distribution $\cP$ over types $\cT$ and $\epsilon > 0$, define discretization $\cT^{(\epsilon)} \subseteq \cT$ to be the set $\{s_0, s_1, \dotsc, s_{1/\epsilon}\}$, where $s_k$ is the value at quantile $k\epsilon$ of $\cP$.
Let $\cC$ be the following rounded version of the joint distribution $F$ over states of the world and receiver types: first draw $(\omega,s)$ from joint distribution $F$, then round $s$ down to the nearest element of $\cT^{(\epsilon)}$.
Given any messaging policy $\sigma$, write $V(\sigma, \cC)$ for the sender's expected payoff from messaging policy $\sigma$ when states and types are jointly distributed according to this rounded distribution $\cC$, and let $V^*(\cC) = \max_{\sigma} V(\sigma, \cC)$.  We likewise define $V(\sigma, F)$ and $V^*(F)$ for the corresponding quantities when states and types are distributed according to $F$.
%

\iffalse
Fix $\cC$, 
%and suppose that $\cT = \{s_1, s_2, \dotsc, s_T\}$ is the (finite) set of types on which $\cC$ has support.  Let 
and let $\cD$ be another 
%type 
joint
distribution with the same marginal distribution over states of the world, 
but such that each receiver's type 
%obtained by ``rounding up'' each receiver's belief 
%``rounded down''
%to the nearest $\epsilon$ quantile.
is increased by up to $\epsilon$ \emph{in probability space} with respect to the marginal distribution over types.
%, and no more than the next-highest type in $\cT$.
%
%That is, for any $\omega \in \Omega$ and any type $s' \in \cT$, $\Pr_{\cC}[s \geq s' | \omega] \leq \Pr_{\cD}[s \geq s' | \omega] \leq \Pr_{\cC}[s \geq s' | \omega] + \epsilon$.}
That is, there is a measurable mapping 
%$\eta \colon \cT \times \Omega \to [0,1]$ such that
$\eta \colon \cT \to \cT$ with %$\eta(\omega, s) = (\omega, s')$ with 
(a) $\eta(s) \geq s$ for all $s \in \cT$, and (b) $\Pr[\eta(s) \geq s'] \leq \Pr[s \geq s'] + \epsilon$ for all $s' \in \cT$.
%, and (c) $\eta(s_i;\omega) < s_{i+1}$ for all $s_i \in \cT$.  We then define $\cD$ to be the distribution over $(\eta(s;\omega),\omega)$ for $(s,\omega)\sim\cC$.
%$\Pr_\cC[t < s' | \omega] \leq \Pr_\cC[t < s | \omega] + \epsilon$.  
Then $\cD$ is the distribution over $(\omega, \eta(s))$ with $(\omega,s) \sim \cC$.
%one of $[0, \epsilon, 2\epsilon, \ldots, 1]$ and each receiver's threshold $\beta$ to one of $[0, \frac{\epsilon}{1 - \epsilon}, \frac{2\epsilon}{1 - 2\epsilon}, \ldots, \frac{1 - \epsilon}{\epsilon}]$.
%
%That is, there are mappings $\eta_p \colon [0,1] \to [0,1]$ and $\eta_{\beta} \colon \mathbb{R} \to \mathbb{R}$ such that (a) $\cD$ is the distribution over $(\eta_p(\Tilde{p}), \eta_\beta(\Tilde{\beta}))$ for $(\Tilde{p}, \Tilde{\beta}) \sim \cC$ and (b) $p' \leq \eta_{p}(p') \leq p' + \epsilon$, and there is some $n \in \mathbb{N}$ such that $\eta_{\beta}(\beta') = \frac{n\epsilon}{1 - n\epsilon}$ and $\frac{(n-1)\epsilon}{1 - (n-1)\epsilon} < \beta' \leq \frac{n\epsilon}{1 - n\epsilon}$ for all $(p', \beta')$.\footnote{Note that it is always possible to round $\beta$ in this way given the rounding of $p$, since the only constraint on $u_R(1,1) - u_R(0,1)$ and $u_R(0,0) - u_R(1,0)$ is that they are $> 0$.}
%
\fi

We claim that, for any fixed messaging policy $\sigma$, changing the distribution of receiver types from $F$ to $\cC$ can only affect the sender's expected utility by at most $O(\epsilon)$.  In particular, this can increase or decrease the value of the optimal policy by no more than $O(\epsilon)$.

\begin{restatable}{lemma}{ValueMonotoneProb}
\label{prop:value.monotone.prob}
    Let $\cC$ be as described above.  Then for any messaging policy $\sigma$, $V(\sigma, \cC) - O(\epsilon) \leq V(\sigma, F) \leq V(\sigma, \cC) + O(\epsilon)$.
    In particular, $V^*(\cC) - O(\epsilon) \leq V^*(F) \leq V^*(\cC) + O(\epsilon)$.
\end{restatable}

\begin{proof}
    %We begin with some notation. From the definition of $\cD$, there is a monotone coupling between $\cC$ and $\cD$ representing the manner in which agent types are increased.  That is, there is a measurable mapping $\eta \colon \cT \times \Omega \cT \to \cT$ with $\eta(s; \omega) \geq s$ for all $s \in \cT$ and $\omega \in \Omega$, such that $\cD$ is the distribution of $(\omega, \eta(s,\omega))$ where $(s,\omega)$ is drawn according to $\cC$.
    We begin with some notation. Write $\eta \colon \cT \to \cT^{(\epsilon)}$ to denote the rounding procedure that defines $\cC$.  That is, $\cC$ is the distribution of $(\omega, \eta(s))$ where $(\omega,s)$ is distributed according to $F$.  Let $\tilde{\cP}$ denote the marginal distribution of types under this rounding procedure; i.e., the distribution over $\eta(s)$ when $s$ is distributed according to $\cP$.

    Continuing with notation, 
    fix some message policy $\sigma$.  Given some quantile $q \in [0,1]$, let $s_F(q)$ denote the type at quantile $q$ from type distribution $\cP$.  Let $p_F(q)$ denote the probability that a receiver of type $s_F(q)$ takes action $a=1$, over the conditional distribution of the state $\omega$ and any randomness in the message policy $\sigma$, given that the joint distribution is $F$.  Likewise, let $s_\cC(q)$ denote the type at quantile $q$ from type distribution $\tilde{\cP}$, and let $p_{\cC}(q)$ denote the probability that a receiver of type $s_{\cC}(q)$ takes action $a=1$, over the conditional distribution of the state $\omega$ and any randomness in the message policy $\sigma$, given that the joint distribution is $\cC$.

    Note then that $V(\sigma,\cC) = \int_0^1 p_{\cC}(q) dq$ and $V(\sigma,F) = \int_0^1 p_{F}(q) dq$.  Thus, to prove the lemma, we wish to relate $p_{\cC}(q)$ and $p_{F}(q)$.  To this end, pick any $q \in [0,1]$ and choose $k$ such that $q \in [k\epsilon, (k+1)\epsilon)$. Then, from the definition of $\eta$, we have $\eta(s_{F}(q)) = s_k$ and hence $s_{\cC}(q) = s_k$.

    Now choose any message $m$ in the support of $\sigma$ such that, given joint distribution $F$, a receiver of type $s_F(q)$ takes action $a=1$ upon receiving message $m$.  Then for any $k' > k$ and any $s' \in \eta^{-1}(s_{k'})$, we have $s' \geq s_F(q)$, and hence (by Lemma~\ref{lem:monotone}) a receiver of type $s'$ would take action $a=1$ as well.  In other words, all types in $\eta^{-1}(s_{k'})$ would take action $a=1$ under joint distribution $F$.  This implies that, upon receiving message $m$ from message policy $\sigma$ when the joint distribution is $\cC$, a receiver of type $s_{k'}$ would take action $a=1$.

    By a similar argument, if under joint distribution $F$ a receiver of type $s_F(q)$ takes action $a=0$ given a certain message $m$, then for any $k' < k$ it must be that, under joint distribution $\cC$, a receiver of type $s_{k'}$ would take action $a=0$ upon receiving message $m$.

    Recall that $p_{\cC}(q) = s_k$.  From the definition of our rounding procedure, we have that $p_{\cC}(q-\epsilon) = s_{k-1}$ (if $q \geq \epsilon)$ and $p_{\cC}(q+\epsilon) = s_{k+1}$ (if $q \leq 1-\epsilon$).  Thus, 
    for any given quantile $q \in [\epsilon,1-\epsilon]$, our analysis above applied to all messages in the support of $\sigma$ implies that $p_{\cC}(q-\epsilon) \leq p_{F}(q) \leq p_{\cC}(q+\epsilon)$.

    We therefore have
    \begin{align*}
    V(\sigma,F) = \int_0^1 p_{F}(q) dq \leq \epsilon + \int_0^{1-\epsilon} p_{F}(q) dq \leq \epsilon + \int_\epsilon^1 p_{\cC}(q) dq \leq \epsilon + V(\sigma,\cC)
    \end{align*}
    and
    \begin{align*}
    V(\sigma,F) \geq \int_0^{1-\epsilon} p_{F}(q) dq \geq \int_\epsilon^1 p_{\cC}(q) dq \geq \int_0^1 p_{\cC}(q) dq -\epsilon = V(\sigma,\cC) - \epsilon
    \end{align*}
    as required.

\end{proof}

%
\iffalse
\begin{proof}[Proof Sketch]
    The main technical challenge is constructing a special messaging policy $\Tilde{\sigma}$ such that that (i) $\Tilde{\sigma}$ is ``close to'' $\sigma'$ (ii) and $\Tilde{\sigma}$ is incentive-compatible (\Cref{lem:helper}). 
    %
    This allows us to show that $V(\Tilde{\cP}, \sigma') - V(\cP', \Tilde{\sigma}) = O(\epsilon)$. 
    %
    The result then follows by observing that $V(\cP', \Tilde{\sigma}) \leq V^*(\cP')$.\looseness-1 
\end{proof}
\fi
%
\Cref{prop:value.monotone.prob} shows that small perturbations to receiver types cannot influence the sender's payoff too much at equilibrium. 
%
%Given our type distribution $\cP'$ (obtained from sampling from $\cP$) with support $\cT'$, 
Given our type distribution $\cP$ with support $\cT$,
let $\tilde{\cT}$ denote a discretized support in which each 
%$(p_i, \beta_i) \in \cT'$
type $s \in \cT$
is rounded down to the nearest point in the discretization, and let $\tilde{\cP}$ denote the corresponding distribution over these rounded values (i.e. the rounded type distribution). 
Then $|\tilde{\cT}| \leq 1/\epsilon$, and by Proposition~\ref{prop:value.monotone.prob} the sender-optimal payoff under $\tilde{\cP}$ and under $\cP$ differ by at most $O(\epsilon)$. 

\xhdr{Adaptive-to-Non-Adaptive Reduction.}
We now return to the problem of optimizing over query policies.
Recall that every simulation query corresponds to a threshold over the space of types. 
Thus, given any distribution over types $\Tilde{\cP}$ with finite support of size $T$, after any (partial) history of simulation queries and responses, what is revealed to the sender is that the receiver's type $s$ lies in some interval $[\theta_L, \theta_H]$.
%threshold $\beta$ lies in some interval $[\theta_L, \theta_H]$. 
%
In other words, there are indices $L$ and $H$ such that the receiver's type is $s_i$ for some $L \leq i \leq H$.

Given a partial history of queries, a simple brute-force algorithm can find the myopically optimal next query in time $O(W^{2.5}T^{3.5})$.
Indeed, one can check each possible threshold query that separates types in the range $[L,H]$ (of which there are at most $H-L-1 = O(T)$ distinct options), then use the algorithm in \Cref{prop:opt-binary} to calculate the sender's optimal messaging policy given each of the two potential responses. 
This method could be used to greedily construct a sequence of queries one by one, but this may not be optimal. 
The optimal querying policy might need to make suboptimal queries in some steps to optimize the overall information at the end of the query process. 
For example, consider an instance with four possible receiver types and $K=2$ queries. 
In this scenario, it will always be optimal to use the first query to separate the smallest two receiver beliefs from the largest two beliefs, \emph{regardless of the immediate utility gain from doing so or any other parameters of the problem instance}, since this allows the second query to fully separate all receiver beliefs. 
Any other initial query is always strictly suboptimal.

%In this section,
We show that the \emph{optimal} adaptive querying policy can be computed via dynamic programming when $T < \infty$. 
In order to do so, we leverage the 
%existence of a 
total ordering over receiver types.
%beliefs when the state is binary.
%
Therefore, we can compute offline an optimal collection of at most $\min\{T, 2^K-1\}$ \emph{possible} queries, and thereafter use binary search to select the next query given any history of responses.
This implies a reduction from the optimal non-adaptive querying policy for $\Tilde{\cP}$ to the optimal adaptive one.

\begin{definition}[Non-Adaptive Querying Policy]
A non-adaptive querying policy with support $Q \subset \mathcal{Q}$ %$\pi$ is a set $Q \subset \mathcal{Q}$ of at most $K$ queries.
%up to $K$ queries (recall that $K$ is the number of rounds of querying).
%Policy $\pi$
poses each of the queries in $Q$ in sequence, independent of the history of responses.
%We call $Q$ the \emph{support} of policy $\pi$.\looseness-1
\end{definition}

\begin{restatable}{lemma}{DPna}
\label{thm:adaptive-to-nonadapative}
    Fix $K \geq 1$. 
    Let $\pi$ be the sender-optimal non-adaptive querying policy with at most $\min\{T, 2^K-1\}$ queries, and let $Q$ be its support.
    Then there exists a sender-optimal (adaptive) querying policy $\pi'$ with at most $K$ queries that only makes queries in $Q$.  Moreover, $\pi'$ can be implemented in time $O(\min\{T,2^K\})$ given access to $Q$.
%    \textcolor{red}{[NSI: write statement that opt adapt uses queries from a fixed set where that set is the optimal non-adaptive (not sure how to phrase this)]}
\end{restatable}
\begin{proof}
    First note that if $2^K \geq T$ then the result follows trivially by taking the support of the non-adaptive querying policy to be the set of all possible queries (up to action equivalence).  So we will assume that $2^K < T$.
    
    Any (adaptive) querying policy $\pi'$ with $K$ queries can generate at most $2^K$ potential histories, each corresponding to a disjoint subinterval of receiver thresholds implied by the history of responses. 
    These subintervals are described by the at most $2^K-1$ thresholds that separate them. 
    One can therefore construct a non-adaptive querying policy $\pi$ with support $Q$ consisting of queries corresponding to each of these thresholds. 
    Querying policy $\pi$ (which makes $2^K-1$ queries) would reveal which subinterval contains the receiver's threshold, which is equivalent to the information revealed by policy $\pi'$.

    We conclude that the optimal non-adaptive policy of length $2^K-1$ is at least as informative as the optimal adaptive policy of length $K$. 
    Given the optimal non-adaptive policy $\pi$ of length $2^K-1$, its information can be simulated by an adaptive policy $\pi'$ of length $K$ via binary search: at each round, $\pi'$ selects the query from $Q$ corresponding to the midpoint threshold among all queries in $Q$ that separate types not yet excluded by the history. 
    As this policy results in a distinct subinterval of types for every possible history, it reveals which of the $2^K$ subintervals defined by $Q$ contain the receiver's threshold, and is therefore as informative as $\pi$. 
    We conclude that $\pi'$ must be optimal among all adaptive policies.
\end{proof}

\xhdr{Finite-Type Non-Adaptive Querying Policies.} Our problem now reduces to finding the best (non-adaptive) set of $\min\{T,2^K\}$ queries, given rounded type distribution $\Tilde{\cP}$.\footnote{While this is exponential in $K$, the number of queries to consider is always at most $T$ since if $2^K > T$, we can choose the set $Q$ to consist of all possible queries (of which there are at most $T-1$ up to action equivalence).\looseness-1} 
We do this via dynamic programming in~\Cref{alg:dp-na}, iteratively building solutions for larger sets of receiver beliefs.
$\sigma^*_{\Tilde{\cP}(i,j)}$ is the optimal messaging policy of~\Cref{prop:opt-binary} under prior $\Tilde{\cP}(i,j)$, where $\Tilde{\cP}(i,j)$ is the type distribution $\Tilde{\cP}$ conditioned on the type being in $\{i, \ldots, j\}$.
We use $\mathrm{ind}(q)$ to index the smallest receiver type which takes action $a=1$ in response to query $q \in \cQ$.\looseness-1

Given a set of $T$ receiver types $[T]$ where $s_1 > \cdots > s_T$, %$\beta_1 > \cdots > \beta_T$, 
\Cref{alg:dp-na} keeps track of the optimal sender utility achievable with $k$ queries when in receiver subset $[i]$ for all $1 \leq i \leq T$.
By the structure induced by non-adaptivity, we can write the sender's utility for $k+1$ queries in receiver subset $[i+1]$ as a function of the optimal solution for $k$ queries in subset $[i]$.
The following result, coupled with~\Cref{thm:adaptive-to-nonadapative}, implies that we can compute the optimal adaptive querying policy in time $O(T^3)$.\looseness-1
%
% \begin{algorithm}[t]
% \caption{Computing the Optimal Non-Adaptive Querying Policy with $T$ types}\label{alg:dp-na}
% \textbf{Input:} Query budget $K \in \mathbb{N}$, rounded type distribution $\Tilde{\cP}$
% \begin{enumerate}
%     \item 
%     For all $1 \leq i \leq j \leq T$,
%     set 
%     %
%     \begin{equation*}
%         V[i, j] :=
%         \mathbb{E}_{(p,u_R) \sim \Tilde{\cP}(i,j)} \mathbb{E}_{\omega \sim p} \mathbb{E}_{m \sim \sigma^*_{\cP(i,j)}(\omega)} \left[ u_S(\omega, m) \right]
%     \end{equation*}
%     \\
%     \textit{$\triangleright \; V[i,j]$ is the sender's expected utility from messaging optimally, given (refined) distribution over types $\Tilde{\cP}(i,j)$.}
    
%     %
%     \item For each $j \in [T]$, 
%     %For every $1 \leq j \leq T$, 
%     set $M[j,0] := V[1,j]$. 
%     %
%     \item For each $k \in [K]$ and $j\in [T]$,
%     compute $$M[j, k] := \max_{q \in \cQ} V[\mathrm{ind}(q)+1, j] + M[\mathrm{ind}(q), k-1].$$
%     \textit{$\triangleright \; M[j,k]$ is the sender's utility from querying/messaging optimally, given type distribution $\Tilde{\cP}(1,j)$ and $k$ queries.\looseness-1}
%     %
%     % \begin{equation*}
%     %     M[j, k] := \max_{q \in \cQ} V[q+1, j] + M[q, k-1]
%     % \end{equation*}
%     %
%     \item The optimal policy then makes the $K$ queries $Q(\Tilde{\cP}, K)$ that obtain value $M[T, K]$.
%     %
% \end{enumerate}
% \textbf{Return} $Q(\Tilde{\cP}, K)$
% \end{algorithm}

\begin{algorithm}[t]
\caption{Computing the Optimal Non-Adaptive Querying Policy with $T$ types}\label{alg:dp-na}
\textbf{Input:} Query budget $K \in \mathbb{N}$, rounded type distribution $\Tilde{F}$
\begin{enumerate}
    \item 
    For all $1 \leq i \leq j \leq T$,
    set 
    \begin{equation*}
        V[i, j] :=
        \mathbb{E}_{(\omega,s) \sim \Tilde{F}(i,j)} \mathbb{E}_{m \sim \sigma^*_{\cP(i,j)}(\omega)} \left[ u_S(\omega, s, m) \right]
    \end{equation*}
    \\
    \textit{$\triangleright \; V[i,j]$ is the sender's expected utility from messaging optimally, given (refined) distribution over types $\Tilde{\cP}(i,j)$.}

    \item For each $j \in [T]$, 
    %For every $1 \leq j \leq T$, 
    set $M[j,0] := V[1,j]$. 
    \item For each $k \in [K]$ and $j\in [T]$,
    compute $$M[j, k] := \max_{q \in \cQ} V[\mathrm{ind}(q)+1, j] + M[\mathrm{ind}(q), k-1].$$
    \textit{$\triangleright \; M[j,k]$ is the sender's utility from querying/messaging optimally, given joint distribution $\Tilde{F}(1,j)$ and $k$ queries.\looseness-1}
    %
    % \begin{equation*}
    %     M[j, k] := \max_{q \in \cQ} V[q+1, j] + M[q, k-1]
    % \end{equation*}
    %
    \item The optimal policy then makes the $K$ queries $Q(\Tilde{F}, K)$ that obtain value $M[T, K]$.
\end{enumerate}
\textbf{Return} $Q(\Tilde{F}, K)$
\end{algorithm}

\begin{algorithm}[t]
\caption{Computing an $O(\epsilon)$-Optimal Adaptive Querying Policy with $K$ queries}\label{alg:total}
%\begin{algorithmic}
\textbf{Input:} Query budget $K \in \mathbb{N}$, %approximation factor 
$\epsilon>0$
\begin{enumerate}
    \item For each $\omega \in \Omega$, round down $\omega$ to the nearest quantile of $F|_{\Omega}$ that is a multiple of $\epsilon$.
    %\item Sample $O(1/\epsilon^2)$ times from $\cP$. Let $\cP'$ be the resp. %corresponding 
    %empirical distribution over types and let $\cT' := \supp(\cP')$.  
    %
    \item For each $s \in \cT$, round down $s$ to the nearest quantile of $\cP$ that is a multiple of $\epsilon$. Denote the rounded joint distribution by $\Tilde{F}$.
    %
%    \item For $(p, \beta) \in \cT'$, round down $p$ to the nearest point in $[0, \epsilon, 2\epsilon, \ldots, 1 - \epsilon]$ and round down $\beta$ to the nearest point in $[0, \frac{\epsilon}{1 - \epsilon}, \frac{2\epsilon}{1 - 2\epsilon}, \ldots, \frac{1 - \epsilon}{\epsilon}]$. Denote this rounded distribution by $\Tilde{\cP}$. 
    %
    \item Run~\Cref{alg:dp-na} given $K$ and $\Tilde{F}$ as input to obtain non-adaptive querying policy $Q(\Tilde{F}, K)$.
    \item Use the adaptive-to-non-adaptive reduction of \Cref{thm:adaptive-to-nonadapative} on $Q(\Tilde{F}, K)$ to make $K$ queries adaptively.\looseness-1
\end{enumerate}
%
%\end{algorithmic}
\end{algorithm}
\begin{restatable}{lemma}{DPnatwo}
\label{thm:DP-na}
    \Cref{alg:dp-na} computes the sender's optimal non-adaptive querying policy in $\mathcal{O}(T^{4.5} W^{2.5})$
    %$\mathcal{O}(T^3)$ 
    time.
\end{restatable}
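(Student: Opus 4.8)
The plan is to reduce the sender's non-adaptive problem in Binary BP to a one-dimensional interval-partitioning optimization over the totally ordered type set, and then solve that optimization by the dynamic program of \Cref{alg:dp-na}. \textbf{Step 1 (non-adaptive simulation queries are threshold cuts).} By \Cref{cor:threshold}, in Binary BP a simulation query $q$ is equivalent to a threshold, and a type $\tau$ follows $q$'s recommendation iff $p_\tau[m_q]$ exceeds that threshold; since the types are totally ordered ($p_1 > \cdots > p_T$), the set of types following the recommendation is a contiguous block, so $q$ partitions $\cT$ into a ``prefix'' and a ``suffix'' — this is the cut index by which we overload $q$. A set $\cQ'$ of $K$ non-adaptive queries then induces (via \Cref{def:partition}) the common refinement of these $K$ prefix/suffix partitions, which for a totally ordered set is exactly a partition of $\cT$ into at most $K+1$ contiguous intervals whose boundaries are among the cuts available in $\cQ$; conversely, any such interval partition with at most $K$ such boundaries is realized by placing one query at each boundary (a separating query exists for any two distinct beliefs by \Cref{cor:threshold}). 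Hence the set of belief-space partitions the sender can induce with $K$ non-adaptive simulation queries is \emph{exactly} the set of partitions of $\cT$ into at most $K+1$ contiguous intervals with boundaries in $\cQ$.

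\textbf{Step 2 (additive decomposition of the objective).} Fix an induced partition into intervals $I_1,\dots,I_m$. In the non-adaptive protocol the sender observes all query answers — hence which interval contains $\tau^*$ — before committing to a messaging policy, so her best response is to play, on interval $I$, the policy $\sigma^*_{\cP(\cdot\mid I)}$ of \Cref{cor:opt-binary}; committing to a single policy regardless of the realized interval is a dominated special case, so nothing finer helps and no coarser behavior is optimal. Since the sender's objective is $\mathbb{E}_{\tau^*\sim\cP}$ of a per-type quantity and the $I_j$ partition $\cT$, the attained utility equals $\sum_{j=1}^m V[I_j,0]$, where $V[I,0] := \sum_{\tau\in I}\cP(\tau)\,\mathbb{E}_{\omega\sim\vp_\tau,\,m\sim\sigma^*_{\cP(\cdot\mid I)}(\omega)}[u_S(\omega,m)]$ is exactly the base case of \Cref{alg:dp-na} (with $\mathbb{E}_{\tau\sim\cP(i,j)}$ read as the conditional distribution on $I$). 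Combining with Step~1, the optimal non-adaptive sender utility equals the maximum of $\sum_j V[I_j,0]$ over partitions of $\cT$ into at most $K+1$ contiguous intervals with boundaries in $\cQ$.

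\textbf{Step 3 (correctness of the DP).} The quantity in Step~2 is a standard interval-partitioning optimum with optimal substructure. Let $V[1,j,k]$ be the best value of $\sum_l V[I_l,0]$ over partitions of the prefix $\{1,\dots,j\}$ using at most $k$ cuts (all from $\cQ$). Consider an optimal such partition and let $q$ be its rightmost cut: then $\{q{+}1,\dots,j\}$ is the final, uncut interval contributing $V[q{+}1,j,0]$, while $\{1,\dots,q\}$ is partitioned with at most $k{-}1$ cuts, and since $V[q{+}1,j,0]$ does not depend on how $\{1,\dots,q\}$ is split, that restriction must itself be optimal. This yields $V[1,j,k] = \max\big(V[1,j,k-1],\ \max_{q\in\cQ,\,q<j}\big(V[q{+}1,j,0] + V[1,q,k-1]\big)\big)$, where the first term is the ``use fewer queries'' fallback (absorbed into $\max_{q\in\cQ}$ when $\cQ$ contains a trivial query, e.g.\ one with $m_q=0$, which every type follows since $p_\tau[0]=1-p_\tau\ge\tfrac12$). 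Induction on $k$, with base case $k=0$ given by Step~2, shows the table is filled correctly, so $V[1,T,K]$ is the optimal non-adaptive value and backtracking the maximizing cuts recovers an optimal query set.

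\textbf{Step 4 (runtime) and the main obstacle.} Maintaining prefix sums of $\cP(\tau)p_\tau$ and $\cP(\tau)(1-p_\tau)$ lets one evaluate the cutoff $i^*$ of \Cref{cor:opt-binary}, hence each base value $V[i,j,0]$, in $O(|I|)$ time, so all $O(T^2)$ base cases cost $O(T^3)$ total (or may be treated as precomputed, as in \Cref{thm:DP}); the $O(TK)$ table entries $V[1,j,k]$ each take a maximum over $O(T)$ choices of $q$, adding $O(T^2K)$, for a total of $O(T^3 + T^2K) = O(T^3K)$. The main obstacle is Step~1 — showing that non-adaptivity confers nothing beyond the choice of a contiguous interval partition with at most $K$ cuts: both directions (that the common refinement of $K$ threshold queries is such a partition, and that every such partition is exactly realizable within $\cQ$) must be argued carefully, leaning on \Cref{cor:threshold} and the total order on beliefs; once the problem is in this combinatorial form, Steps~2–4 are routine.
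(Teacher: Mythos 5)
Your proposal is correct and takes essentially the same approach as the paper: the core argument is the same rightmost-cut / telescoping recurrence $V[1,j,k] = \max_{q}\bigl(V[q{+}1,j,0] + V[1,q,k-1]\bigr)$ justified by induction on the number of queries, together with the same base-case computation via \Cref{cor:opt-binary}. Your Step~1, which explicitly proves that the set of realizable non-adaptive belief partitions is exactly the set of contiguous interval partitions with at most $K$ cuts from $\cQ$, makes rigorous something the paper leaves implicit (relying on the discussion in Section~3.2 and the remark preceding the theorem about total ordering); that is a useful tightening of the exposition rather than a different route.
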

\begin{proof}
    The algorithm begins by using Proposition~\ref{prop:opt-binary} to precompute, for each range of receiver beliefs indexed by $[i,j]$ with $i \leq j$, the value of the optimal sender's messaging policy conditional on the receiver's threshold lying in the given range. 
    These values are stored as $V[i,j]$. 
    This can be done in total time $O(T^{4.5} W^{2.5})$.

    The algorithm next computes $M[j,k]$, for $1 \leq j \leq T$ and $0 \leq k \leq K$, to be the maximum sender value achievable when the receiver's threshold is known to lie in the index range $[1,j]$ and there are $k$ queries remaining to make. 
    When $k=0$ there are no further queries, so $M[j,0] = V[1,j]$.  For $k > 0$, we determine $M[j,k]$ by enumerating all possibilities for informative queries (of which there are at most $j \leq T$). 
    As there are $TK$ total entries in $M$, each of which takes $O(T)$ time to compute, our runtime for this step is $O(T^3)$.
    (Recall that $K < T$ without loss of generality).  
\end{proof}

\xhdr{Putting It All Together.} We now have all of the pieces required to bound the performance of~\Cref{alg:total}. 
\begin{restatable}{theorem}{mainthm}
\label{thm:approx.algorithm}
    Choose any $0 < \epsilon < 1$. 
    Given a problem instance with $\Tilde{W}$ states and $\Tilde{T}$ types, one can compute a querying policy in $O(\Tilde{W} + \Tilde{T} + \epsilon^{-7})$ time using~\Cref{alg:total} and a messaging policy in $O(\epsilon^{-5})$ time using~\Cref{prop:opt-binary}, for which the sender's expected utility is at least $OPT - O(\epsilon)$, where $OPT$ is the sender's optimal expected utility at equilibrium.
\end{restatable}
\begin{proof}
    The proof follows by applying~\Cref{lem:state.discretization},~\Cref{prop:value.monotone.prob},~\Cref{thm:adaptive-to-nonadapative}, and~\Cref{thm:DP-na} with $T = O(1/\epsilon)$ and $W = O(1/\epsilon)$.
\end{proof}
% \begin{proof}[Proof Sketch]
%     The proof proceeds by upper-bounding the sender's difference in rewards on $\cP$ and $\cP'$ by the total variation distance $\text{TV}(\cP, \cP')$, using~\Cref{prop:value.monotone.prob} to bound the loss in utility in further discretizing from $\cP'$ to $\Tilde{\cP}$, and computing the optimal adaptive querying policy on $\Tilde{\cP}$ by applying~\Cref{thm:adaptive-to-nonadapative} and~\Cref{thm:DP-na}. 
%     %
%     We then bound $\text{TV}(\cP, \cP')$ by the supremum over differences in the corresponding cumulative density functions before appling the Dvoretzky–Kiefer–Wolfowitz inequality to show that this is upper-bounded by $O(\epsilon \log(1/\gamma))$ with probability at least $1 - \gamma$ when $O(1/\epsilon^2)$ samples are used to construct $\cP'$.\looseness-1 
% \end{proof}

\begin{corollary}\label{cor:finite}
    When $T < \infty$, one can use steps 3 and 4 of~\Cref{alg:total} to compute the optimal adaptive querying policy in time $\mathcal{O}(T^{4.5} W^{2.5})$.
\end{corollary}
\begin{proof}
    The proof of~\Cref{cor:finite} follows immediately from~\Cref{thm:adaptive-to-nonadapative} and~\Cref{thm:DP-na}.
\end{proof}

\subsection{Continuous States and Types}
\label{sec:continuous}

 To this point we have assumed for technical convenience that $\Omega$ and $\cT$ are both \emph{finite} subsets of $[0,1]$.  We next show how to relax the finiteness assumption.  To this end, suppose now that $\Omega \times \cT$ is a compact, measurable, convex subset of $[0,1]^2$ and that $F$ is a measurable atomless probability distribution over $\Omega \times \cT$ with twice-differentiable density function $f \colon \Omega \times \cT \to \mathbb{R}_{\geq 0}$. 

 One place where finiteness was used was in the statement of Assumption~\ref{ass:type2}, the increasing differences property for distribution $F$.  This assumption is used to ensure monotonicity of actions in receiver type.  For continuous types, we substitute an appropriate continuous analog: supermodularity of the density function.

\begin{assumption}\label{ass:type3}
    The distribution $F$ over pairs $(\omega,s)$ has supermodular density; i.e., it satisfies $\frac{\partial^2}{\partial \omega \partial s} f(\omega,s) \geq 0$ for all $(\omega,s) \in \Omega \times \cT$.
\end{assumption}

Given Assumption~\ref{ass:type3}, we can conclude the analog of Lemma~\ref{lem:monotone}.  This is because, for any $s_2 > s_1$, the posterior distribution over $\omega$ given $s_2$ first-order stochastically dominates the distribution over $\omega$ given $s_1$, given any message $m$ from the sender.

With Assumption~\ref{ass:type3} in place, the rounding procedure used in Algorithm~\ref{alg:total} can be applied to any $\Omega \times \cT \subseteq [0,1]^2$, resulting in a joint distribution $\tilde{F}$ with finite support.  Moreover, our error sensitivity bounds in Lemmas~\ref{lem:state.discretization} and~\ref{prop:value.monotone.prob} do not depend on finiteness of $\Omega$ and $\cT$.  Our algorithm and analysis therefore carry over without further change.

\begin{corollary}
    Choose any $0 < \epsilon < 1$ and suppose the rounding in Lemmas~\ref{lem:state.discretization} and~\ref{prop:value.monotone.prob} may be done in $O(\Gamma)$ total time, {for some $\Gamma = \Gamma(\epsilon)$}.  
    Then one can compute a querying policy in $O(\Gamma + \epsilon^{-7})$ time using~\Cref{alg:total} and a messaging policy in $O(\epsilon^{-5})$ time using~\Cref{prop:opt-binary}, {so that} the sender's expected utility is at least $OPT - O(\epsilon)$, where $OPT$ is the sender's optimal expected utility at equilibrium.
\end{corollary}
% \subsection{Extensions}\label{sec:extensions}

% % \subsection{Approximate oracles}
% % \label{sec:approx-oracle}

% We now highlight several extensions of our model. 
% %
% An additional extension to the setting where the receiver's utility function $u_R : \Omega \times \cA \rightarrow \mathbb{R}$ is unknown is in~\Cref{sec:extension.receiver.types}. 

\subsection{Approximate Oracles}
\label{sec.approx}
Our baseline model assumes that the query oracle has perfect access to the receiver's type, which it uses to simulate the receiver's actions. 
However, our results also extend to scenarios where the oracle's access to the receiver's type is imperfect and subject to noise (as is the case when, e.g. the oracle is a machine learning model predicting the receiver's behavior). 
We show that the sender's utility will degrade smoothly with the amount of noise in the oracle.

To this end, given types $s_1, s_2 \in \cT$, write $\Delta(s_1,s_2) := |\Pr_{s\sim\cP}[s\leq s_1] - \Pr_{s\sim\cP}[s \leq s_2]|$.
We then suppose the following holds for some constants $\delta, \gamma > 0$: 
%writing $H$ for the CDF over receiver types, 
If the receiver has type $s$, then the query oracle is endowed with a type $s'$ such that $\Pr\sbr{\Delta(s,s') < \delta} > 1-\gamma$.
%if the receiver has belief $p$, then the query oracle is endowed with a belief $p'$ such that $\Pr\sbr{|p' - p| < \delta}>1-\gamma$.
%
%Moreover, if the receiver has threshold $\beta = \frac{\Tilde{n}\delta}{1 - \Tilde{n}\delta}$, then the oracle is endowed with a threshold $\beta'$ such that $\Pr\sbr{\beta' \in [\frac{\Tilde{n}\delta - \delta}{1 - \Tilde{n}\delta + \delta}, \frac{\Tilde{n}\delta + \delta}{1 - \Tilde{n}\delta - \delta}]} > 1 - \gamma$.\looseness-1 
    
% $|p' - p| < \delta$ with probability at least $1-\gamma$.  
Then, we can consider a sender who uses an (approximately) optimal querying policy as in~\Cref{thm:approx.algorithm}, resulting in a posterior over receiver's types whose support includes $s'$.
%$(p', \beta')$.
%
The sender can then reduce each receiver type in this support 
%the support of their posterior
 by $\delta$ (in probability space) 
 %and round down $\beta$ to $\frac{\Tilde{n}\delta - \delta}{1 - \Tilde{n}\delta + \delta}$ (for $\Tilde{n}$ defined as above) 
 before constructing a messaging policy.
 With probability at least $1-\gamma$, this perturbed posterior will only under-estimate the receiver's true type, and only by at most two intervals in the discretization. 
 Thus, by Proposition~\ref{prop:value.monotone.prob}, their resulting querying policy generates at most $O(\delta)$ less utility compared to that constructed with a perfect oracle
 %than the optimal policy (i.e., the policy they would have constructed for a perfect oracle), 
 again with probability at least $1-\gamma$.  Since the sender's utility is unconditionally always at least $0$, we obtain the following result.

\begin{corollary}\label{cor:p}
Let $s$ be the receiver's type and suppose the query oracle simulates a receiver with type $s'$, where $\Pr\sbr{\Delta(s,s')>\delta}<\gamma$ for some $\delta,\gamma>0$.
%Let $p$ be the receiver's belief and suppose the query oracle simulates a receiver with belief $p'$, where $\Pr\sbr{|p'-p|>\delta}<\gamma_2$ for some $\delta,\gamma_1 > 0$. 
%
%Let $\beta = \frac{\Tilde{n}\delta}{1 - \Tilde{n}\delta}$ be the receiver's (true) threshold for some $\Tilde{n} \in \mathbb{R}$ and $\delta > 0$, and suppose the query oracle simulates a receiver with threshold $\beta'$, where $\Pr\sbr{\beta' \not\in [\frac{\Tilde{n}\delta - \delta}{1 - \Tilde{n}\delta + \delta}, \frac{\Tilde{n}\delta + \delta}{1 - \Tilde{n}\delta - \delta}]}<\gamma_2$ for some $\gamma_2 > 0$.
%
Then we can compute querying and messaging policies for the sender that obtain expected payoff $O((1-\gamma)OPT - \delta)$, where $OPT$ is the expected payoff of the optimal policies given access to an oracle for which $s'=s$ with probability $1$.\looseness-1
\end{corollary}
%

% \subsection{Partition queries}
% \label{sec:partitionqueries}

%
% \subsection{Costly queries}
% \label{sec:extension.costs}

%\textcolor{red}{[NSI: maybe make last subsection or put in appendix, for now just want to get the structure of the subsection down]}

%\section{Appendix}
\section{Extensions}\label{sec:general}

We now consider several extensions of our model: one to a more general query structure called a \emph{partition query} (\Cref{sec:partition}) and one to a setting with non-binary actions (\Cref{sec:non-binary}). 
In this section, we focus on the special case where the receiver's utility function is known to the sender. 
In~\Cref{sec:partition} we show that finding the optimal querying policies is NP-Complete with partition queries, and in~\Cref{sec:non-binary} we show that in settings with $3+$ actions, simulation queries induce a more general partition over the receiver type space than thresholds. This leads to challenges when designing algorithms for determining the sender's optimal querying policy with non-binary actions.

\subsection{Partition Queries}\label{sec:partition}
The key idea behind~\Cref{alg:dp-na} is that there is always a ``total ordering'' over both receiver beliefs and (simulation) queries, and thus one can use dynamic programming to iteratively construct an optimal non-adaptive querying policy for finite types. 
But simulations, while well-motivated, are a limited type of query.
More generally, an oracle might be able to provide information about subsets of receiver types. 
In the most general query model, the sender presents a partition of the receiver type space and the oracle returns the piece of the partition in which the (true) receiver type lies.\looseness-1

\begin{definition}\label{def:query}[Partition Oracle]
    A partition oracle is characterized by the \emph{query space} $\cQ$: the set of allowable queries $Q=\{q_1,\ldots,q_k\}\in \cQ$ that  partition of receiver's beliefs
    (\ie $q_i\subset[0,1]$ with $\cup_i q_i=[0,1]$).
    %
    %A partition oracle with query space $\cQ$
    The oracle inputs a query $Q \in \cQ$ and returns the subset $q\in Q$ containing the receiver's type.\looseness-1
    % such that the receiver's belief $p\in q$.
\end{definition}

%We call such an oracle a partition oracle.
In general, such \emph{partition queries} do not admit a total ordering over types and queries, so our dynamic program does not extend. In fact, we show the corresponding decision problem is NP-Complete. The decision problem is as follows. \emph{We are given: joint distribution $F$ over $\Omega \times \cT$, query space $\cQ$, $K \in \mathbb{N}$, and $u >0$.
%The decision problem for finding the optimal querying policy with $K$ rounds is as follows: \textit{Given $(\cT, \cP, \cQ, K, u)$ where $\cT$ is a set of receiver beliefs, $\cP$ is a distribution over receiver beliefs, $\cQ$ is a set of allowable queries, $K \in \mathbb{N}$, and $u \in \mathbb{R}_+$, 
Does there exist a querying policy $\pi$ such that, the sender achieves expected utility at least $u$ after $K$ rounds of interaction with $\pi$?}\looseness-1

\begin{restatable}{theorem}{hardness}
\label{thm:hardness}
    Finding the optimal querying policy is NP-Complete with partition queries.\looseness-1
\end{restatable}
\begin{proof}[Proof Sketch]
To prove NP-Hardness, we reduce from Set Cover. Given a universe of elements $U$, a collection of subsets $S$, and a number $K$, Set Cover 
%the set cover decision problem 
asks if there exists a collection of subsets $S' \subseteq S$ such that $|S'| \leq K$ and $\bigcup_{\zeta \in S'} \zeta = U$. Our reduction proceeds by creating a receiver type for every element in $U$ and a partition query for every subset in $S$. We define $u$ and $\cT$ so that the sender can only achieve expected utility $u$ if she can distinguish between every pair of receivers; \ie only if after executing policy $\pi$, the sender knows the receiver's type exactly. Finally, we show that under this construction the answer to 
%the set cover decision problem 
Set Cover 
is $\texttt{yes}$ if and only if the answer to our decision problem is also $\texttt{yes}$.\looseness-1
\end{proof}
The following definitions will be useful for the proof of~\Cref{thm:hardness}. 
\begin{definition}[Type Partition]\label{def:partition}
    A querying policy $\pi$ induces a partition $\Gamma_{\pi}$ over receiver type space such that $\bigcup_{\eta \in \Gamma_{\pi}} \eta = \cT$. 
    Receiver type $s$ belongs to the subset $\eta_{\pi}[s] \in \Gamma_{\pi}$ of all receiver types consistent with the history generated by $\pi$ for belief $s$.
\end{definition}
\begin{definition}[Complete Separation]\label{def:sep}
    We say that a querying policy $\pi$ completely separates the set of receiver types $\cT$ if, for every receiver type $s \in \cT$,
    \begin{equation*}
        \eta_{\pi}[s] \cap \cT = \{s\},
    \end{equation*}
    where $\eta_{\pi}[s]$ is defined as in~\Cref{def:partition}.
\end{definition}

\iffalse
The following definitions will be useful for the proof of~\Cref{thm:hardness}. 
%
\begin{definition}[Non-Adaptive Belief Partition]\label{def:partition}
    %
    A set of non-adaptive queries $\cQ' \subseteq \cQ$ induces a partition $H_{\cQ'}$ over receiver belief space such that $\bigcup_{\eta \in H_{\cQ'}} \eta = \{\vp_{\tau} | \tau \in \cT\}$. 
    %
    Receiver belief $\vp_{\tau}$ belongs to the subset $\eta_{\cQ'}[\tau] \in H_{\cQ'}$ such that $\eta_{\cQ'}[\tau] := \bigcap_{q \in \cQ'} s_{q,i(\tau)}$.
    %
\end{definition}
%
\begin{definition}[Complete Separation]\label{def:sep}
    We say that a set of queries $\cQ' \subseteq \cQ$ completely separates the set of receiver types $\cT$ if, for every type $\tau \in \cT$,
    %
    \begin{equation*}
        \eta_{\cQ'}[\tau] = \{\tau\},
    \end{equation*}
    %
    where $\eta_{\cQ'}[\tau]$ is defined as in~\Cref{def:partition}.
\end{definition}
\fi

%
%In the set cover problem there is a \emph{universe} of elements $U$ and a collection of (sub)sets of elements $S$. 
%
%The following decision problem is NP-Hard:
%
\iffalse
\begin{definition}[Set Cover Decision Problem]\label{def:scdp}
    Given $(U, S, K)$ where $U$ is a universe of elements, $S$ is a collection of subsets of elements in $U$, and $K \in \mathbb{N}$, does there exist a collection of subsets $S' \subseteq S$ of size $|S'| \leq K$ such that $\bigcup_{s \in S'} s = U$?
\end{definition}
\fi 
%
We use the shorthand $\scdp(U,S,K)$ and $\nadp(\Omega, \cT, F, \cQ, K, u)$ to refer to the Set Cover and Query decision problems respectively.
\begin{proof}
    Observe that given a candidate solution $\pi$ and the set of corresponding BIC signaling policies for each receiver subset, we can check whether the sender's expected utility is at least $u$ in polynomial time, by computing the expectation. 
    This establishes that the problem is in NP. 
    To prove NP-Hardness, we proceed via a reduction from Set Cover.
    %
    \iffalse
    \begin{definition}[Query Separation Decision Problem]\label{def:qsdp}
        Given $(\cT, \cQ, K)$ where $\cT$ is a set of possible receiver types, $\cQ$ is a set of possible oracle queries, and $K \in \mathbb{N}$, does there exist a collection of queries $\cQ' \subseteq \cQ$ of size $|\cQ'| \leq K$ such that $\cQ'$ completely separates all receiver types from one another (according to~\Cref{def:sep})?
    \end{definition} 
    \fi 
    %
    Given an arbitrary Set Cover decision problem $\scdp(U,S,K)$, 
    \begin{enumerate}
        \item Let $\Omega = \{0, 1\}$.
        \item Create a set of receiver beliefs $\cT(U)$. 
        Specifically, add type $s_{\varnothing}$ to $\cT(U)$, and add a type $s_e$ to $\cT(U)$ for every element $e \in U$, where these types will be specified below.
        \item For each subset $\zeta \in S$, create a query $q_{\zeta}$ that separates the receiver types $\{s_e\}_{e \in \zeta}$ from each other and from all other types $\cT(U) \backslash \{s_e\}_{e \in \zeta}$. 
        For example if $\zeta = \{1, 2, 3\}$, then $q_{\zeta} = \{ \{s_1\}, \{s_2\}, \{s_3\}, \cT(U) \backslash \{s_1, s_2, s_3\}\}$. 
        Denote the resulting set of queries by $\cQ(S)$.
        Note that each query in $\cQ(S)$ has a unique non-singleton set in the partition it induces.
        \item Let $\cP$ be the uniform prior over $\cT(U)$. 
        Set the receiver types such that each receiver type has a different optimal signaling policy.\footnote{Note that it is always possible to do this, as when $\Omega = \{0, 1\}$ the optimal signaling policy will be different for two receivers with different beliefs over $\Omega$.}
        Set $u = \mathbb{E}_{(\omega, s) \sim F} \mathbb{E}_{m \sim \sigma_{p}^*(\omega)}[u_S(\omega, s, a)]$, where $\sigma_{s}^*$ is the optimal signaling policy when the receiver is known to have type $s$. 
    \end{enumerate}
    \paragraph{Part 1:} Suppose $\scdp(U, S, K) = \texttt{yes}$. Let $S'$ denote the set of subsets that covers $S$, and let $\pi$ denote the querying policy that poses queries $\{q_{\zeta}\}_{\zeta \in S'}$, in sequence, regardless of the history of responses. 
    %be the corresponding set of queries in the $\qsdp$ instance. 
    %
    Let us consider each $s \in \cT$ on a case-by-case basis. 
    \paragraph{Case 1.1: $s \in \cT \backslash \{s_{\varnothing}\}$.} 
    Since $S'$ covers $S$, $\{s\}$ is a partition induced by at least one query $q$ made by $\pi$ (by construction), and so $\eta_{\pi}[s] = \{s\}$.
    \paragraph{Case 1.2: $s = s_{\varnothing}$.} 
    Likewise, since $S'$ covers $S$, for every $s \in \cT \backslash \{s_{\varnothing}\}$ there is at least one query $q$ for which $\{s\}$ is a partition induced by $q$.  Therefore $p_{\varnothing}$ and $s$ are separated by $q$, which implies
    %is \emph{not} contained in at least one $s_{q,i(p_{\varnothing})}$ by construction. 
    %
    $s \not\in \eta_{\pi}[s_{\varnothing}]$. We conclude that $\eta_{\pi}[s_{\varnothing}] = \{s_{\varnothing}\}$.

    Putting the two cases together, we see that $\pi$ completely separates $\cT(U)$ according to~\Cref{def:sep}, and so the sender will be able to determine the receiver's type and achieve optimal utility. 
    Therefore $\nadp(\Omega, \cT(U), F, \cQ(S), K, u) = \texttt{yes}$.
    \paragraph{Part 2:} Suppose $\scdp(U, S, K) = \texttt{no}$. 
    Recall that, by construction, for each query $q \in \cQ(S)$ there is exactly one response corresponding to a non-singleton set.
    %Consider any set of queries $\cQ' \subseteq \cQ(S)$ such that $|\cQ'| \leq K$. 
    %
    Fix any querying policy $\pi$ and 
    consider the (unique) history $H$ of queries that is generated by $\pi$ when the response from each query is its unique non-singleton set.  Let $\cQ'$ be the set of $K$ queries posed to the oracle in history $H$.
    Note that there is a one-to-one mapping between queries in $\cQ(S)$ and subsets in $S$, and so we can denote the set of subsets corresponding to $\cQ'$ by $S' := \{\zeta\}_{q_{\zeta} \in \cQ'}$. 
    Since $S'$ does not cover $S$, there must be at least one element $e_{S'} \in S \backslash (\bigcup_{z \in S'} z)$. 
    If there are multiple such elements, pick one arbitrarily. 
    Then if the receiver has type $s_{e_{S'}}$, querying policy $\pi$ will generate history $H$.  Moreover,
    by the construction of $\cQ$, we know that $s_{e_{S'}}$ falls in the same partition as $s_{\varnothing}$ 
    %(i.e. $\zeta_{q,i(\tau_{\varnothing})}$) 
    for every $q \in \cQ'$. 
    Therefore $\eta_{\pi}[s_{\varnothing}] \neq \{s_{\varnothing}\}$, and so $\pi$ does not completely separate $\cT(U)$ according to~\Cref{def:sep}.
    Since the sender cannot perfectly distinguish between all receiver types and our choice of $\cQ'$ was arbitrary, this implies that $\nadp(\Omega, \cT(U), F, \cQ(S), K, u) = \texttt{no}$. 
\end{proof}

\subsection{Beyond Binary Actions}
\label{sec:non-binary}
We now highlight the differences which come when considering more general action spaces than the binary setting of~\Cref{sec:equilcomp}. 
We believe that designing optimal querying algorithms for this setting is an interesting (but challenging) avenue for future research.

Suppose there are $|\cA|>2$ actions and every receiver type shares the same utility function $u_R(\omega, a)$.  By making a simulation query $q = (\sigma_q, m_q)$, the sender is specifying a convex polytope $\cR_q \subseteq \Delta^W$ for which we have a \emph{separation oracle}, a concept from optimization which can be used to describe a convex set. 
In particular, given a point $\vx \in \mathbb{R}^d$, a separation oracle for a convex body $\cK \subseteq \mathbb{R}^d$ will either (1) assert that $\vx \in \cK$ or (2) return a hyperplane $\boldsymbol{\theta} \in \mathbb{R}^d$ which separates $\vx$ from $\cK$, i.e. $\boldsymbol{\theta}$ is such that $\langle \boldsymbol{\theta}, \mathbf{y} \rangle > \langle \boldsymbol{\theta}, \vx \rangle$ for all $\mathbf{y} \in \cK$. 
Formally, we have the following equivalent characterization of the oracle's response to a simulation query:\looseness-1
\begin{restatable}{proposition}{sepfact}\label{fact:sep}
    Let $\alpha(\vp) := \sum_{\omega \in \Omega} (u_R(\omega, m_q) - u_R(\omega, a)) \cdot \sigma_q(m_q |\omega) \cdot \vp[\omega]$. 
    By making a simulation query $q = (\sigma_q, m_q)$, the sender is specifying a polytope 
    \begin{equation*}
        \cR_q := \{ \vp \in \Delta^{W} \; : \; \alpha(\vp) \geq 0, \forall a \in \cA\}
    \end{equation*}
    and the oracle returns either (i) $\vp_{\tau^*} \in \cR_q$ or (ii) $\vp_{\tau^*} \not\in \cR_q$ and the hyperplane $\alpha(\vp' - \vp_{\tau^*}) > 0$ for some $a' \in \cA$ and all $\vp' \in \cR_q$.
\end{restatable}

This is a natural extension of the threshold characterization of simulation queries in~\Cref{sec:think}. 
However, in this more general setting it may be possible to distinguish between three or more beliefs using a \emph{single} simulation query. 
Hence the natural generalization of~\Cref{alg:dp-na} is no longer polynomial time. 
The following is an example of a setting in which it is possible to distinguish between up to $d$ different beliefs using a single simulation query. 

\begin{example}\label{ex:sep-many}
    Suppose that there are $d$ states and $d$ actions, where $u_R(\omega, a) = \mathbf{1}\{\omega = a\}$. 
    Consider $d$ receiver beliefs $\vp_1, \ldots, \vp_d \in \Delta^d$ and let $\vp_i[\omega_i] = \frac{2}{d+1}$, $\forall i \in [d]$ and $\vp_i[\omega_j] = \frac{1}{d+1}$ for $j \neq i$.
    Under this setting, receiver type $i$ will take action $a_i$ when $m = a_1$ if for all $j \neq 1$,
    $\sigma(a_1| \omega_1) \cdot \frac{2}{d+1} \geq \sigma(a_1| \omega_j) \cdot \frac{1}{d+1}$.

    Therefore, a receiver with belief $\vp_1$ will take action $a_1$ when $m = a_i$ if for all $j \neq 1$, 
    $\sigma(a_1| \omega_j) = 2 \sigma(a_1|\omega_1)$. 
    Now let us consider another receiver type $i \neq 1$. 
    Type $i$ will default to taking action $i$ under this messaging policy whenever $m=a_1$, since it maximizes her expected utility over $\vp_i$, conditioned on receiving message $m = a_1$.\looseness-1
\end{example}
\section{Conclusions and Future Research}\label{sec:conc}
We initiate the study of information design with an oracle, motivated by settings such as experimentation on a small number of users. We study a setting in which the sender in a Bayesian persuasion problem can interact with an oracle 
%for $K$ rounds 
before trying to persuade the receiver.
We characterize the resulting equilibrium of the game, and provide an algorithm for computing $O(\epsilon)$-optimal querying and messaging polices. 
Our algorithm runs in time polynomial in $1/\epsilon$. 
Extensions to imperfect oracles, more general query structures, and costly queries are also considered.\looseness-1 

There are several exciting directions for future research. First, at a high level, it would be interesting to explore other information design settings in which players have oracle access.  While we focus on Bayesian persuasion for its simplicity and ubiquity, it would be natural to ask these same algorithmic design questions in settings such as cheap talk \citep{crawford1982strategic} or verifiable disclosure \citep{grossman1980disclosure}. 
Next, throughout this work, we assume that the oracle simulates receivers and note that relaxing this to arbitrary partition oracles results in an NP-Complete problem.  However, there might be natural subclasses of partition queries, such as interval queries or partitions into few sets, that circumvent this hardness result and are of practical interest. 
Finally, while our model extends naturally to settings with three or more actions, simulation queries become separation oracles as shown in \Cref{fact:sep}, which means that the natural generalization of our algorithms do not run in polynomial time. 
It would be interesting to design (approximation) algorithms for such settings. 

\bibliographystyle{ACM-Reference-Format}
\bibliography{bibliographies/bib-abbrv, bibliographies/bib-AGT, bibliographies/bib-ML, bibliographies/bib-bandits, bibliographies/bib-slivkins, bibliographies/refs}

%%%%%%%%%%%%%%%%%%%%%%%%%%%%%%%%%%%%%%%%%%%%%%%%%%%%%%%%%%%%

\appendix
\onecolumn
\section{Technical Appendix}\label{app}
We say two messaging policies $\sigma$ and $\sigma'$ are {\em outcome equivalent} if for all $\omega$, $m\in\sigma(\omega)$, $m'\in\sigma'(\omega)$ and receiver prior $p$, the receiver-optimal action $a$ upon seeing $m$ equals the receiver-optimal action $a'$ upon seeing $m'$.
The following \emph{revelation principle} is well-known in the literature on persuasion with multiple receivers; we state it here for completeness.

\begin{restatable}{proposition}{rp}\label{fact:rp2}
    If $T < \infty$, for any messaging policy $\sigma$, there is an outcome-equivalent policy $\sigma'$ with just $M = T + 1$.  Moreover, these messages can be written as $\{m_0, m_1, \dotsc, m_T\}$, where a receiver of type $s_i$ will take action $a = 1$ upon receiving message $m_j$ if and only if $j \geq i$.
\end{restatable}
\begin{proof}
    % A receiver of type $\beta$ takes action $a=1$ after seeing message $m$ if and only if $\beta \geq \frac{\sigma(m|0)}{\sigma(m|1)}$.
    % %
    % Therefore if a receiver of type $\beta$ takes action $1$ after seeing a message $m$, any receiver with type $\beta' \geq \beta$ will also take action $a=1$.
    % %
    % %Similarly, a receiver with prior $p$ will take action $a=0$ after seeing message $m$ if and only if $p < \frac{\sigma(m|0)}{\sigma(m|1) + \sigma(m|0)}$, which implies that a receiver with prior $p' \leq p$ will also take action $a=0$.
    % %
    % There are therefore at most $T + 1$ distinct subsets of receiver beliefs that are induced to take action $a=1$ on any message $m$ of signaling policy $\sigma$, each corresponding to a minimal type $\beta_i \in \mathcal{T}$ that takes the action (plus one more to denote no receiver taking the action).  
    
    % Let $\sigma'$ be the messaging policy with messages $m_0, m_1, \dotsc, m_T$, such that $\sigma'(\omega) = m_i$ whenever $\sigma(\omega)$ would induce receivers with types $\{\beta_1, \dotsc, \beta_i\}$ to act (or $m_0$ if it induces no receiver to act).  Then $\sigma$ and $\sigma'$ are outcome equivalent by construction, and $\sigma'$ has the required structure of $T+1$ messages.
    %
    By Lemma~\ref{lem:monotone}, if a receiver of type $s$ takes action $1$ after seeing a message $m$, any receiver with type $s' \geq s$ will also take action $a=1$.
    %
%    A receiver of type $s$ takes action $a=1$ after seeing message $m$ if and only if $\mathbb{E}_{\omega \sim B_R(m, s)[u(\omega, s)]} \geq 0$.
    %
%    Therefore if a receiver of type $s$ takes action $1$ after seeing a message $m$, any receiver with type $s' \geq s$ will also take action $a=1$.
    %
    %Similarly, a receiver with prior $p$ will take action $a=0$ after seeing message $m$ if and only if $p < \frac{\sigma(m|0)}{\sigma(m|1) + \sigma(m|0)}$, which implies that a receiver with prior $p' \leq p$ will also take action $a=0$.
    %
    {
    There are therefore at most $T + 1$ distinct subsets of receiver beliefs that are induced to take action $a=1$ on any message $m$ of signaling policy $\sigma$, each corresponding to a minimal type $s_i \in \mathcal{T}$ that takes the action (plus one more to denote no receiver taking the action).  
    
    Let $\sigma'$ be the messaging policy with messages $m_0, m_1, \dotsc, m_T$, such that $\sigma'(\omega) = m_i$ whenever $\sigma(\omega)$ would induce receivers with types $\{s_1, \dotsc, s_i\}$ to act (or $m_0$ if it induces no receiver to act).  Then $\sigma$ and $\sigma'$ are outcome equivalent by construction, and $\sigma'$ has the required structure of $T+1$ messages.}
\end{proof}
We note that $T+1$ messages may be necessary. For example, suppose that the state of the world is binary, $\Omega = \{0,1\}$, and there are two equally-likely receiver beliefs over the probability of the high state, $\{0.25, 0.75\}$.  In this case, a messaging policy with message space $\mathcal{M} = \{m_0, m_1, m_2\}$ that uniformly randomizes between messages $m_0$ and $m_1$ when $\omega = 0$, and uniformly randomizes between messages $m_1$ and $m_2$ when $\omega = 1$, induces unique behaviors on each message.  Indeed, any receiver that receives message $m_0$ can infer that $\omega = 0$ so they choose action $a=0$, and any receiver that receives message $m_2$ can infer that $\omega = 1$ so they choose action $a = 1$.  However, upon receiving message $m_1$, an agent with belief $p_i$ will have posterior belief $p_i$ as well, so an agent of type $p_1 = 0.75$ would take action $a=1$ upon receiving message $m_1$, whereas an agent of type $p_2 = 0.25$ would not.
\optsp*
\begin{proof}

Let $\cP$ denote the prior over types. 
Since there is a total ordering over receiver types, by~\Cref{fact:rp2} it suffices to consider policies with messages $m_0, m_L, m_{L+1}, \dotsc, m_H$, such that a receiver of type $s_i$ chooses action $a=1$ on message $m_j$ if and only if $i \leq j$. 
%Let $\neg m_i$ be the opposite action of $m_i$ (i.e. $\neg m_i = 0$ when $m_i = 1$ and vice versa.) 
We claim that the sender's optimization can therefore be written as
\[\max_{\sigma} \; \sum_{i=L}^{H} \cP_i \cdot \sum_{j=i}^{H} \sum_{\omega \in \Omega} 
\Pr[\omega | s_i] \cdot \sigma(m_j | \omega) \]
\[\text{s.t. } \forall i \in [L,H], \;\; \sum_{\omega \in \Omega} 
\Pr[\omega | s_i] \cdot \sigma(m_i | \omega) \cdot u(\omega, s_i) \geq 0 \quad\text{(IC)}\]
\[\sum_{i=L}^H \sigma(m_i | \omega) \leq 1, \;\; \sigma(m_j | \omega) \geq 0 \quad \forall j \in [L, H], \; \forall \omega \in \Omega.\]
To see why, note that the objective iterates over all possible realizations of receiver type $s_i$ and corresponding posterior belief over $\omega$, and for each one we sum over all messages $m_j$ that would induce a receiver of that type to take action $1$.  The probability of receiving message $m_j$ and the state being $\omega$, given that the receiver's type is $s_i$, is then precisely $\Pr[\omega|s_i] \cdot \sigma(m_j|\omega)$.
The first constraint is incentive compatibility of the receiver types following the recommendation of the messages: a receiver with type $s_i$ will take action $a=1$ on message $m_j$ precisely if $m_j = \arg\max_{a \in \{0, 1\}} \mathbb{E}_{\omega \sim B_R(m_j, s_i)}[u_R(\omega, s_i, a)]$, and by monotonicity of the types the inequality binds only when $j=i$.
The last two constraints simply require that $\sigma$ is a well-defined messaging policy, where message $m_0$ receives all probability mass not attributed to any $m_i$ for $L \leq i \leq H$.

We next note that in an optimal solution, all IC constraints must bind with equality except possibly for message $m_H$ (and, implicitly, message $m_0$).  The reason being that if there is a message $m_i$ with $i < H$ whose IC constraint does not bind, then (a) that message is sent with positive probability, and moreover (b) there exists a state $\omega$ such that one can shift mass from $\sigma(m_i|\omega)$ to $\sigma(m_H|\omega)$ which would strictly increase the objective value.  
Also, for the maximum state of the world $\overline{\omega} = \max \Omega$, we must have $\sum_{i=L}^H \sigma(m_i|\overline{\omega}) = 1$, as otherwise we could increase $\sigma(m_H|\overline{\omega})$ and increase the objective value with no violation of constraints.  

%As the IC constraints always hold with equality, the probability of each message given the maximal state in $\Omega$ is determined by the distribution over all other states.  

Since the $T$ particular constraints described above are necessarily tight at every solution of the sender's optimization problem, they can be replaced with equalities and we are left with a program formulation with $W$ non-trivial linearly independent constraints.
%Our program can therefore be equivalently rewritten with $(W-1)(T+1)$ variables and no IC constraints, so that the only non-trivial constraints are of the form $\sum_{i=L}^H \sigma(m_i|\omega) \leq 1$ for each $\omega \in \Omega$.
%
The Rank Lemma therefore implies that an optimal solution to this program places weight on at most $W$ messages (see, e.g. \cite{lau2011iterative}). 
Finally, we note that this linear program can be solved in time $\tilde{O}((WT)^\omega)$, where $\omega \approx 2.37$ is the current matrix multiplication time, since that there are $O(WT)$ variables and $O(W)$ non-trivial linearly independent constraints~\cite{cohen2021solving}.
%
%To see why this is the case, note that for any two fixed indices $(i,j)$, the optimal messaging policy that puts weight on only those two indices may be constructed in $O(1)$ time.\footnote{For any $i$ and $j$, there is at most one choice of $x_i$ and $x_j$ that satisfies both constraints with equality.  If only one constraint is satisfied with equality, then only a single value of $x_i$ will be positive and that value will be maximized subject to the tight constraint.  The optimal messaging policy can then be computed directly from $x_i$ and $x_j$.}
%
%There are $O(T^2)$ such messaging policies, and we can pre-compute the coefficients on the variables $x_L, \hdots, x_H$ in $O(T^2)$ time. 
%
%Given these coefficients, we can compute the objective value of each messaging policy in $O(1)$ time. 
%
%Therefore, we can evaluate all $O(T^2)$ messaging policies and select the best in $O(T^2)$ time. 
%
\end{proof}

\section{Costly Queries}
\label{app:costly.pbe}

Our results extend to the setting in which the sender can make unlimited queries, but pays a \emph{cost} for doing so. 
In order to extend to this setting with costs, we need to slightly modify the sender's utility function and the definition of the Bayesian perfect equilibrium.
Specifically, if the sender makes queries $q_1,\ldots,q_K$ (for some $K$ chosen by the sender) and the receiver takes action $a$, the sender's utility is reduced by $\sum_ic_{q_i}$ where $c_{q_i}<1$ is the cost of query $q_i$.  For any history $H$ of queries and responses observed by the sender, we can write $c(H)$ for the sum of costs of the queries posed in history $H$. The definition of Bayesian perfect equilibrium must change slightly to reflect this utility function.%, since the sender must account for costs when selecting an optimal query rule. 
This adds a cost term to the equilibrium condition for query policies.  For completeness, we describe the modified equilibrium definition in Appendix~\ref{app:costly.pbe}.\looseness-1

Our reduction from adaptive to non-adaptive querying policies does not directly extend to the costly setting. 
For example, if the median threshold in the optimal non-adaptive querying policy corresponds to a query with very high cost, it might be suboptimal to make that query first; one might instead begin with a less-balanced but cheaper query and only make the expensive query later in the sequence if necessary. 
Nevertheless, an algorithm similar to \Cref{alg:dp-na} may be used to compute an $O(\epsilon)$-optimal adaptive querying policy for the costly setting (Algorithm~\ref{alg:dp-a-cost}). 
The primary change relative to Algorithm~\ref{alg:dp-na} is that it will maintain a table with entries $M[i,j]$, rather than $M[j,k]$, where $M[i,j]$ denotes the payoff of the sender-optimal querying policy starting from the information that the sender's type lies between $s_i$ and $s_j$. 
As before, the update step for $M[i,j]$ includes an option to take the maximizer to be $V[i,j]$, which corresponds to the choice to terminate the sequence of queries.\looseness-1
\begin{corollary}
    Choose any $0 < \epsilon < 1$. 
    One can compute a querying policy in $O(\epsilon^{-7})$ time using~\Cref{alg:dp-a-cost} and a messaging policy in $O(\epsilon^{-5})$ time using~\Cref{prop:opt-binary}, for which the sender's expected utility is at least $OPT - O(\epsilon)$, where $OPT$ is the sender's optimal expected utility at equilibrium.
\end{corollary}

Below we formally define perfect Bayesian equilibrium in this modified model.

\begin{definition}
A strategy profile $(\pi^*, \sigma^*, a^*)$ and belief rules $B_S$, $B_R$
form a Perfect Bayesian equilibrium if
\end{definition}
%\textcolor{red}{[NSI: todo, define me]}
\begin{enumerate}
    \setlength{\itemsep}{0pt}
    \setlength{\parsep}{0pt}
    \setlength{\topsep }{0pt}
    \item For each $m \in \mathcal{M}$ and $s \in \cT$, action $a^*(m, s)$ maximizes the receiver's expected utility given belief $B_R(m, s)$, i.e.
    $a^*(m, s) \in \arg\max_a \,
    \mathbb{E}_{\omega \sim B_R(m, s)}\sbr{ u_R( \omega, s, a) }$.
    \item Belief $B_R(m,s)$ is the correct posterior over $\omega$ given $s$, $\sigma^*_H$, and the fact that $\sigma^*_H(\omega) = m$.
    \item For each $H \in \mathcal{H}$, messaging policy $\sigma^*_H$ maximizes the sender's expected utility given belief $B_S(H)$, i.e.
    $\sigma^*_H \in \arg\max_{\sigma}\, 
    \mathbb{E}_{s \sim B_S(H),\; \omega \sim F|_s}\sbr{ u_S( \omega, a^*(\sigma(\omega), s)) }$.
    \item Belief $B_S(H)$ is a correct posterior over $s$, given $\pi^*$ and that $\pi^*$ generates history $H$.
    \item Sender's querying policy $\pi^*$ maximizes the sender's expected utility given $\sigma^*$ and $a^*$, i.e.
    $\pi^* \in \arg\max_{\pi}\;
    \mathbb{E}_{(\omega,s) \sim F,\; H \sim \pi }\sbr{ u_S( \omega, a^*(\sigma^*_H(\omega), s) ) - c(H) } $.
\end{enumerate}
An algorithm similar to \Cref{alg:dp-na} may be used to compute the optimal non-adaptive querying policy for the costly setting by (1) setting $K=T-1$, and (2) 
%introducing a new query $q_{\varnothing}$ such that $c_{q_{\varnothing}} = 0$, and 
using the following modified update step: $M[j, k] := \max \{ V[1,j], \max_{q \in \cQ} V[q+1, j] + M[q, k-1] - c_q \}$, 
where taking $V[1,j]$ as the maximizer for $M[j,k]$ corresponds to terminating the sequence of queries.
\begin{corollary}
    When $T < \infty$, the above modification to~\Cref{alg:dp-na} computes the sender's optimal non-adaptive querying policy under costly queries in $\mathcal{O}(T^{4.5} W^{2.5})$ time. 
\end{corollary}

\begin{algorithm}[t]
\caption{Computing an $O(\epsilon)$-Optimal Adaptive Querying Policy with Costly Queries}\label{alg:dp-a-cost}
\textbf{Input:} Query costs $c_q \geq 0$\\
\begin{enumerate}
    \item For each $\omega \in \Omega$, round down $\omega$ to the nearest quantile of $F|_{\Omega}$ that is a multiple of $\epsilon$.
    %\item Sample $O(1/\epsilon^2)$ times from $\cP$. Let $\cP'$ be the resp. %corresponding 
    %empirical distribution over types and let $\cT' := \supp(\cP')$.  
    %
    \item For each $s \in \cT$, round down $s$ to the nearest quantile of $\cP$ that is a multiple of $\epsilon$. Denote the rounded joint distribution by $\Tilde{F}$.
    \item For all $1 \leq i \leq j \leq T$,
    set 
    \begin{equation*}
        V[i, j] := 
        \mathbb{E}_{(\omega,s) \sim \Tilde{F}(i,j)} \mathbb{E}_{m \sim \sigma^*_{\cP(i,j)}(\omega)} \left[ u_S(\omega, s, m) \right],
    \end{equation*}
    %
    %$V[i, j, 0] := \sum_{l=q}^i \cP(\tau_l) \cdot \mathbb{E}_{\tau \sim \cP(i,j)} \mathbb{E}_{\omega \sim \vp_{\tau}} \mathbb{E}_{m \sim \sigma^*_{\cP(i,j)}(\omega)} u_S(\omega, m)$
    %
    where $\sigma^*_{\cP(i,j)}$ is the optimal messaging policy of~\Cref{prop:opt-binary} under joint distribution $\Tilde{F}(i,j)$. 
    \item For every $1 \leq i \leq T$, set
        $M[i,i] := V[i,i]$
    \item For every $1 \leq i < j \leq T$, compute 
    \begin{equation*}
        M[i, j] := \max \left\{ V[i,j], \max_{q \in \cQ} M[i,q] + M[q+1, j] - c_q \right\}
    \end{equation*}
    %
    %$V[1, j, k] := \max_{q \in \cQ} V[q+1, j, 0] + V[1, q, k-1]$ 
    %
    \item The policy makes the sequence of queries that obtain value $M[1, T]$
\end{enumerate}
\end{algorithm}

%by (1) setting $K=T-1$, and (2) 
%introducing a new query $q_{\varnothing}$ such that $c_{q_{\varnothing}} = 0$, and 
%using the following modified update step:
%
%\begin{equation*}
%    M[i, j, k] := \max \{ V[i,j], \max_{q \in \cQ} M[q+1, j, k-1] + M[i, q, k-1] - c_q \}
%\end{equation*}
%where here $M[i,j,k]$ represents the payoff of the sender-optimal querying policy starting from the information that the sender's belief lies between $p_i$ and $p_j$.

%\begin{equation*}
%    M[j, k] := \max \{ V[1,j], \max_{q \in \cQ} V[q+1, j] + M[q, k-1] - c_q \}
%\end{equation*}
%where taking $V[i,j]$ as the maximizer for $M[j,k]$ corresponds to terminating the sequence of queries.
%

%\input{appendix/regret}

\end{document}